\newtheorem{thm}{Theorem}[section]
\newtheorem{theorem}[thm]{Theorem}
\newtheorem{proposition}[thm]{Proposition}
\newtheorem{lemma}[thm]{Lemma}
\newtheorem{claim}[thm]{Claim}
\newcommand{\bs}{{\bm s}}
\newcommand{\bo}{{\bm o}}
\newcommand{\bx}{{\bm x}}
\newcommand{\by}{{\bm y}}
\newcommand{\bt}{{\bm t}}
\newcommand{\calP}{\mathcal{P}}
\newcommand{\bbR}{\mathbb{R}}
\newcommand{\bbZ}{\mathbb{Z}}
\newcommand{\Zivny}{\v{Z}ivn\'y}
\title{Improved Approximation Algorithms for $k$-Submodular Function Maximization}
\author{Satoru Iwata
\thanks{Department of Mathematical Informatics, Graduate School of Information Science and Technology,
  University of Tokyo, Tokyo 113-8656, Japan ({\tt iwata@mist.i.u-tokyo.ac.jp}).
  Supported by JSPS Grant-in-Aid for Scientific Research (B) No.~23300002.}
\and
Shin-ichi Tanigawa\thanks{
  Research Institute for Mathematical Sciences, Kyoto University, Kyoto 606-8502 Japan  (\texttt{tanigawa@kurims.kyoto-u.ac.jp}).
  Supported by JSPS Grant-in-Aid for Scientific Research (B) No.~23300002.}
  \and
Yuichi Yoshida\thanks{
  National Institute of Informatics, and  Preferred Infrastructure, Inc.
  (\texttt{yyoshida@nii.ac.jp}).
    Supported by JSPS Grant-in-Aid for Young Scientists (B) (No.~26730009), MEXT Grant-in-Aid for Scientific Research on Innovative Areas (No.~24106003), and JST, ERATO, Kawarabayashi Large Graph Project.} }
\date{\today}
\begin{document}
\maketitle
\begin{abstract}
This paper presents a polynomial-time $1/2$-approximation algorithm for maximizing nonnegative $k$-submodular functions. This improves upon the previous $\max\{1/3, 1/(1+a)\}$-approximation by Ward and \Zivny~\cite{Ward:2014wja}, where $a=\max\{1, \sqrt{(k-1)/4}\}$. We also show that for monotone $k$-submodular functions there is a polynomial-time $k/(2k-1)$-approximation algorithm while for any $\varepsilon>0$ a $((k+1)/2k+\varepsilon)$-approximation algorithm for maximizing monotone $k$-submodular functions would require exponentially many queries.
In particular, our hardness result implies that our algorithms are asymptotically tight.

We also extend the approach to provide constant factor approximation algorithms for maximizing
skew-bisubmodular functions,
which were recently introduced as generalizations of bisubmodular functions.
\end{abstract}

\section{Introduction}

Let $2^V$ denote the family of
all the subsets of $V$. A function $g:2^V\to\bbR$ is called \emph{submodular} if it satisfies
$$g(Z_1)+g(Z_2)\geq g(Z_1\cup Z_2)+g(Z_1\cap Z_2)$$ for every pair of $Z_1$ and $Z_2$ in $2^V$.
Submodular function maximization contains important NP-hard optimization problems such as max cut and certain facility location problems. It is known to be intractable in the standard value oracle model, and approximation algorithms
have been studied extensively. In particular, Feige, Mirrokni, and Vondr\'ak~\cite{Feige:2011iu} have developed
constant factor approximation algorithms for the unconstrained maximization of nonnegative submodular
functions and shown that no approximation algorithm can achieve the ratio better than $1/2$.
Buchbinder, Feldman, Naor, and Schwartz~\cite{bfns} provided much simpler algorithms that substantially
improve the approximation factor. In particular their randomized version, called the randomized double-greedy algorithm, achieves the factor of $1/2$, which is the best possible in the oracle value model.

%

In this paper we shall consider the maximization of nonnegative $k$-submodular functions which generalizes the submodular function maximization. Let $(k+1)^V:=\{(X_1,\dots, X_k)\mid X_i\subseteq V\, (i=1,\ldots,k), X_i\cap X_j=\emptyset\, (i\neq j)\}$.
 A function $f:(k+1)^V\rightarrow \mathbb{R}$ is called \emph{$k$-submodular} if,
for any $\bx=(X_1,\dots,X_k)$ and $\by=(Y_1,\dots, Y_k)$ in $(k+1)^V$, we have
 \begin{eqnarray*}
 f(\bx)+f(\by) & \geq & f(\bx\sqcup \by)+f(\bx\sqcap \by)
 \end{eqnarray*}
where
\begin{align*}
\bx\sqcap \by&:=(X_1\cap Y_1,\dots, X_k\cap Y_k) \\
 \bx\sqcup\by&:=\left(X_1\cup Y_1\setminus \left(\bigcup_{i\neq 1}X_i\cup Y_i\right), \dots, X_k\cup Y_k\setminus \left(\bigcup_{i\neq k}X_i\cup Y_i\right)\right).
\end{align*}
$k$-submodular functions were first introduced by Huber and Kolmogorov~\cite{hk} as a generalization of {\em bisubmodular functions}, which correspond to $2$-submodular functions in the above notation.
Examples of bisubmodular functions include the rank functions of delta-matroids and the cut capacity functions of bi-directed networks, and the minimization problem has been extensively studied~\cite{fi,mf}.
Examples of $k$-submodular functions will be explained later.

Ward and \Zivny~\cite{Ward:2014wja} and the present authors~\cite{ity} independently observed that algorithms for submodular function maximization due to Buchbinder, Feldman, Naor, and Schwartz~\cite{bfns} can be naturally extended to bisubmodular function maximization. In particular the randomized double greedy algorithm for submodular functions can be seen as a randomized greedy algorithm in the bisubmodular setting and it achieves the best approximation ratio $1/2$.
Ward and \Zivny~\cite{Ward:2014wja} further analyzed the randomized greedy algorithm for $k$-submodular function maximization and proved that its approximation ratio is $1/(1+a)$,
where $a=\max\{1,\sqrt{(k-1)/4}\}$.
They also gave a deterministic $1/3$-approximation algorithm.

In this paper we shall present an improved $1/2$-approximation algorithm for maximizing $k$-submodular functions. Our algorithm follows the randomized greedy framework as in~\cite{ity, Ward:2014wja}, and the main idea is the use of a different probability distribution derived from a geometric sequence at each step.

By extending the argument by Feige, Mirrokni, and Vondr{\'a}k~\cite{Feige:2011iu} we also show that for any $\varepsilon>0$,  a $((k+1)/2k +\varepsilon)$-approximation for the $k$-submodular function
maximization problem would require exponentially many queries, implying the tightness of our result for large $k$.
In fact, our inapproximability result holds for a much restricted class of monotone $k$-submodular functions,
where a $k$-submodular function  is said to be \emph{monotone} if $f(\bx)\leq f(\by)$ for any
$\bx=(X_1,\dots, X_k)$ and $\by=(Y_1,\dots, Y_k)$ in $(k+1)^V$ with $X_i\subseteq Y_i$ for $1\leq i\leq k$.
On the other hand we show that  there is a $k/(2k-1)$-approximation for monotone $k$-submodular functions.
In particular it attains an approximation ratio of $2/3$ for bisubmodular functions.


In order to understand the relation between $k$-submodular function maximization and other maximization problems, it is useful to understand characteristic properties of $k$-submodular functions, called orthant submodularity and pairwise monotonicity.
To see them, define a partial order $\preceq$ on $(k+1)^V$ such that,
for $\bx=(X_1,\dots, X_k)$ and $\by=(Y_1,\dots, Y_k)$ in $(k+1)^V$,
$\bx\preceq \by$  if $X_i\subseteq Y_i$ for every $i$ with $1\leq i\leq k$.
Also,  define
\[
\Delta_{e,i}f(\bx)=f(X_1,\dots, X_{i-1}, X_i\cup \{e\}, X_{i+1},\dots, X_k)-f(X_1,\dots, , X_k)
\]
for $\bx\in (k+1)^V$, $e\notin \bigcup_{j=1}^k X_j$, and $i\in [k]$, which is a marginal gain when adding $e$ to the $i$-th component of $\bx$.
Then it is easy to see that the $k$-submodularity implies
the \emph{orthant submodularity}:
\begin{equation*}
\Delta_{e,i}f(\bx)\geq \Delta_{e,i}f(\by) \qquad
(\text{$\bx, \by\in (k+1)^V$ with $\bx\preceq \by$, $e\notin \bigcup_{j\in [k]} Y_j$,  and $i\in [k]$}),
\end{equation*}
and the \emph{pairwise monotonicity}:
\begin{equation*}
\Delta_{e,i} f(\bx)+\Delta_{e,j} f(\bx)\geq 0 \qquad (\text{$\bx\in (k+1)^V$, $e\notin \bigcup_{\ell\in [k]} X_{\ell}$, and  $i,j\in [k]$ with $i\neq j$}).
\end{equation*}
Ward and \Zivny~\cite{Ward:2014wja} showed that these properties indeed characterize $k$-submodular functions, extending the corresponding result for bisubmodular functions~\cite{ando1996characterization}.
\begin{theorem}[Ward and \Zivny~\cite{Ward:2014wja}]
\label{thm:char}
A function $f:(k+1)^V\rightarrow \mathbb{R}$ is $k$-submodular if and only if $f$ is orthant submodular and pairwise monotone.
\end{theorem}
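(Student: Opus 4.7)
One direction is routine: from $k$-submodularity, orthant submodularity follows by applying the defining inequality to the pair consisting of $\by$ and the configuration obtained from $\bx \preceq \by$ by adding an extra element $e \notin \bigcup_\ell Y_\ell$ at position $i$; and pairwise monotonicity at $\bx$ for distinct positions $i \neq j$ follows by applying it to the two configurations formed from $\bx$ by placing $e$ at position $i$ and at position $j$ respectively, for which both the meet and the join collapse to $\bx$. So the content lies in the converse, which I plan to prove by induction on the conflict count
\[
m(\bx, \by) \;:=\; \sum_{i \neq j} |X_i \cap Y_j|.
\]

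\textbf{Base case $m(\bx, \by) = 0$.} Writing $Z_\ell = X_\ell \cap Y_\ell$, $A_\ell = X_\ell \setminus Y_\ell$, $B_\ell = Y_\ell \setminus X_\ell$, the hypothesis $m = 0$ forces $(\bx \sqcap \by)_\ell = Z_\ell$ and $(\bx \sqcup \by)_\ell = Z_\ell \cup A_\ell \cup B_\ell$ for every $\ell$. Fix an enumeration of the elements of $\bigcup_\ell A_\ell$ together with their target positions, and build $\bx$ from $\bx \sqcap \by$ one element at a time while simultaneously building $\bx \sqcup \by$ from $\by$ in the same order. Every intermediate configuration on the lower chain is $\preceq$ its counterpart on the upper chain, so orthant submodularity compares the marginal gains termwise, yielding $f(\bx) - f(\bx \sqcap \by) \geq f(\bx \sqcup \by) - f(\by)$, which is the desired inequality.

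\textbf{Inductive step.} Pick a conflict $e \in X_i \cap Y_j$ with $i \neq j$, and let $\bx^0, \by^0$ be $\bx, \by$ with $e$ deleted from its component. Each of $(\bx^0, \by)$, $(\bx, \by^0)$, $(\bx^0, \by^0)$ has conflict count $m - 1$, so the induction hypothesis gives the $k$-submodular inequality for all three. Set $L := \bx \sqcap \by$ and $U := \bx \sqcup \by$. A direct case analysis (using that $e$ appears in neither $L$ nor $U$ in the original pair) shows that each of the three reduced pairs has meet exactly $L$, while their joins are respectively $U$ with $e$ added at position $j$, $U$ with $e$ added at position $i$, and $U$ itself; call the first two $U_j$ and $U_i$. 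Adding the first two induction inequalities and subtracting the third yields
\[
f(\bx) + f(\by) \;\geq\; f(U_i) + f(U_j) - f(U) + f(L).
\]
Since $e \notin \bigcup_\ell U_\ell$, pairwise monotonicity of $f$ at $U$ for positions $i, j$ gives $f(U_i) + f(U_j) \geq 2 f(U)$, which absorbs the $-f(U)$ term and delivers $f(\bx) + f(\by) \geq f(U) + f(L)$.

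The delicate point is the bookkeeping in the inductive step: I must verify that the joins and meets of the three reduced pairs transform exactly as stated, so that the linear combination collapses into a single pairwise monotonicity application at $U$. Since each perturbation only touches the single element $e$, and $e$ contributes to neither $L$ nor $U$ in the original pair, I expect these identities to hold cleanly; but a careful check distinguishing whether $e$ ends up at position $i$, at $j$, or nowhere in each modified pair is what makes the argument go through.
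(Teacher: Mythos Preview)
The paper does not actually prove this theorem: it is stated with attribution to Ward and \v{Z}ivn\'y, and only the easy direction (that $k$-submodularity implies orthant submodularity and pairwise monotonicity) is remarked to be ``easy to see'' in the text preceding the statement. So there is no in-paper proof to compare your converse argument against. That said, your proposal for the converse contains a genuine gap.

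Your forward direction and your base case $m(\bx,\by)=0$ are fine. The problem is in the inductive step. You write that ``adding the first two induction inequalities and subtracting the third'' yields
\[
f(\bx)+f(\by)\ \ge\ f(U_i)+f(U_j)-f(U)+f(L),
\]
but subtracting an inequality is not a valid operation: from $A_1\ge B_1$, $A_2\ge B_2$, $A_3\ge B_3$ one cannot conclude $A_1+A_2-A_3\ge B_1+B_2-B_3$. Concretely, the sum of your first two hypotheses gives
\[
f(\bx)+f(\by)\ \ge\ f(U_i)+f(U_j)+2f(L)-\bigl(f(\bx^0)+f(\by^0)\bigr),
\]
and to reach your displayed line you would need $f(\bx^0)+f(\by^0)\le f(U)+f(L)$, which is the \emph{reverse} of what the third induction hypothesis provides. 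In fact the intermediate inequality you claim is \emph{stronger} than the goal itself, since $f(U_i)+f(U_j)-f(U)=f(U)+\Delta_{e,i}f(U)+\Delta_{e,j}f(U)\ge f(U)$ by pairwise monotonicity; so you are trying to bootstrap a strictly stronger statement out of the same induction hypothesis via an illegitimate manipulation.

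If you try to repair this using only one reduced pair, say $(\bx^0,\by)$, you get
\[
f(\bx)+f(\by)-f(U)-f(L)\ \ge\ \Delta_{e,i}f(\bx^0)+\Delta_{e,j}f(U),
\]
and the right-hand side is not obviously nonnegative because $\bx^0$ and $U$ are in general incomparable under $\preceq$ when other conflicts remain. The induction on the conflict count is the right idea, but the bridge from the reduced pairs back to $(\bx,\by)$ needs a more careful combination of orthant submodularity and pairwise monotonicity than a straight add-and-subtract; you should revisit how the marginals $\Delta_{e,i}f(\cdot)$ and $\Delta_{e,j}f(\cdot)$ at the various base points can be chained together.
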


The $k$-submodular function maximization problem is closely related to the submodular function maximization with a partition matroid constraint. Consider a partition $\{U_1,\dots, U_n\}$ of a finite set $U$ such that $|U_i|=k$ and a partition matroid on $U$ such that $I\subseteq U$ is independent if and only if $|I\cap U_i|\leq 1$ for every $1\leq i\leq n$.
By identifying each $U_i$ with $[k]$, one can identify each independent set $I$ with an element $\bx$ of $(k+1)^V$, where $V=\{1,\dots, n\}$.
Therefore, for a given submodular function $g:2^U\rightarrow \mathbb{R}_+$, its restriction to the family of independent sets can be considered as a function from $(k+1)^V$ to $\mathbb{R}_+$ satisfying orthant submodularity.
In general, if $g$ is monotone, the submodular function maximization with a matroid constraint admits $(1-1/\mathrm{e})$-approximation~\cite{calinescu}, which is known to be best possible in the value oracle model~\cite{mirrokni2008tight}.
On the other hand, when $g$ is non-monotone, the current best approximation ratio is $1/\mathrm{e}$~\cite{feldman2011unified} for general matroids, and deriving the tight bound is recognized as a challenging problem even for uniform matroids (see~\cite{bfns14}).
The $k$-submodular function maximization is in between:
it admits $1/2$-approximation whereas it assumes pairwise monotonicity, which is strictly weaker than monotonicity.

It is also worth mentioning that in the $k$-submodular function maximization there always exists a maximizer which is a partition of $V$ (c.f.~Proposition~\ref{prop}), which corresponds with a base in the partition matroid.
Vondr{\'a}k~\cite{Vondrak:2009hr} showed that, under a matroid base constraint,  any $(1-1/\nu+\varepsilon)$-approximation requires exponentially many queries for any $\varepsilon>0$, where $\nu$ denotes the fractional packing number (see~\cite{Vondrak:2009hr} for the definition). One can easily show that $\nu=k$ in our case, and hence this general result does not give a nontrivial bound for large $k$.

We should also remark that, in the $k$-submodular function maximization problem, function values are determined over $(k+1)^V$ and hence over the independent sets in the corresponding submodular function maximization with a partition matroid constraint. It is not in general true that such a non-negative (monotone) function can be extended to a non-negative (monotone) submodular function over $2^U$.

An important special case of the submodular function maximization with a partition matroid constraint  is the submodular welfare problem.
In the submodular welfare problem, given a finite set $V$ and \emph{monotone} submodular functions $g_i:2^V\rightarrow \mathbb{R}_+$ for $1\leq i\leq k$, we are asked to find a partition $\{X_1,\dots, X_k\}$ of $V$ that maximizes $\sum_{i=1}^k g_i(X_i)$. Feldman, Naor and Schwartz~\cite{feldman2011unified} gave a $(1-(1-1/k)^k)$-approximation approximation algorithm, which is known to be best possible in the value oracle model~\cite{Vondrak:2009hr}.
Now,  consider $h:(k+1)^V\rightarrow \mathbb{R}_+$ given by
\[
h(X_1,\dots, X_k)=\sum_{i=1}^k g_i(X_i) \qquad ((X_1,\dots, X_k)\in (k+1)^V).
\]
Then the submodularity and the monotonicity of $g_i$ imply the orthant submodularity and the pairwise monotonicity of $h$, and hence $h$ is monotone $k$-submodular by Theorem~\ref{thm:char}.
Thus the monotone $k$-submodular function maximization generalizes the submodular welfare problem.
In fact we will show that the approximation algorithm by Dobzinski and Schapira~\cite{Dobzinski:2006wi} for the submodular welfare problem can be extended to the monotone case.

A similar construction gives another interesting application of the $k$-submodular function maximization.
For a submodular function $g:2^V\rightarrow \mathbb{R}_+$, define $h':(k+1)^V\rightarrow \mathbb{R}_+$ by
\[
 h'(X_1,\dots, X_k)=\sum_{i=1}^k g(X_i) \qquad ((X_1,\dots, X_k)\in (k+1)^V).
\]
The resulting $h'$ satisfies orthant submodularity but may not satisfy pairwise monotonicity in general.
However if $g$ is symmetric (i.e., $g(X)=g(V\setminus X)$ for $X\subseteq V$) it turns out that $h'$ is pairwise monotone and thus it is $k$-submodular by Theorem~\ref{thm:char}.
Therefore, for a symmetric submodular function $g$, our algorithm gives a $\frac{1}{2}$-approximation for the problem of finding a partition $\{X_1,\dots, X_k\}$ of $V$ that maximizes $\sum_{i=1}^k g(X_i)$.
Note that this problem generalizes the Max $k$-cut problem.

As another extension of the bisubmodularity, Huber, Krokhin, and Powell~\cite{hkp} have introduced the concept of
skew-bisubmodularity.
For $\alpha\in [0,1]$, a function $f:3^V \to \mathbb{R}$ is called \emph{$\alpha$-bisubmodular} if,
for any $\bx=(X_1,X_2)$ and $\by=(Y_1,Y_2)$ in $3^V$,
\begin{align*}
f(\bx)+f(\by)\geq f(\bx\sqcap \by)+\alpha f(\bx\sqcup \by)+(1-\alpha) f(\bx \dot\sqcup \by),
\end{align*}
where
\[
 \bx \dot\sqcup \by=(X_1\cup Y_1,(X_2\cup Y_2)\setminus (X_1\cup Y_1)).
\]
A function $f:3^V \to \mathbb{R}$ is called \emph{skew-bisubmodular}
if it is $\alpha$-bisubmodular for some $\alpha\in [0,1]$.


We show that a randomized greedy algorithm provides an approximate solution within the factor
of $\frac{2\sqrt{\alpha}}{(1+\sqrt{\alpha})^2}$ for maximizing an $\alpha$-bisubmodular function.
 This means that the double greedy algorithm of Buchbinder~\emph{et~al.}~\cite{bfns} relies on a symmetry of submodular functions.
Combining this with another simple algorithm, we obtain an approximate algorithm whose approximate
ratio is at least $\frac{8}{25}$ for any $\alpha\in [0,1]$.
This result has been included in our previous technical report~\cite{ity}, but not in a reviewed article.

The rest of this paper is organized as follows. In Section~\ref{sec:k-submodular}, we present our approximation algorithms for the $k$-submodular function maximization.
In Section~\ref{sec:inapprox}, we discuss the inapproximability.
In Section~\ref{sec:a-bisubmodular} we analyze a randomized greedy algorithm for maximizing $\alpha$-bisubmodular functions, and then we present an improvement that leads to a constant-factor approximation algorithm.

\section{Approximation algorithms for $k$-submodular functions}
\label{sec:k-submodular}
In this section we give approximation algorithms for the $k$-submodular function maximization problem.
To analyze $k$-submodular functions it is often convenient to identify $(k+1)^V$ as $\{0,1\dots, k\}^V$,
that is, the set of $|V|$-dimensional vectors with entries in $\{0,1,\dots, k\}$.
Namely, we associate  $(X_1,\dots, X_k)\in (k+1)^V$ with $\bx\in \{0,1,\dots, k\}^V$ by
$X_i=\{e\in V\mid \bx(e)=i\}$ for $1\leq i\leq k$.
Hence we sometimes abuse notation, and simply write $\bx=(X_1,\dots, X_k)$ by regarding a vector $\bx$ as a subpartition of $V$.

For $\bx\in \{0,1,\dots, k\}^V$, let ${\rm supp}(\bx)=\{e\in V\mid x(e)\neq 0\}$,
and let ${\bf 0}$ be the zero vector in $\{0,1,\dots, k\}^V$.

\subsection{Framework}

Our approximation algorithms are obtained from the following meta-framework (Algorithm~\ref{alg:meta}) for maximizing $k$-submodular functions by changing the probability distributions used in the framework.

\begin{algorithm}
  \caption{}\label{alg:meta}
  \begin{algorithmic}
  \REQUIRE{A nonnegative $k$-submodular function $f:\{0,1,\ldots,k\}^V \to \bbR_+$.}
  \ENSURE{A vector $\bs$.}
  \STATE{$\bs\leftarrow0$.}
  \FOR{each $e \in V$}
    \STATE{Set a probability distribution $p$ over $\{1,\dots, k\}$.}
    \STATE{Let $\bs(e) \in \{1,\ldots,k\}$ be chosen randomly, with $\Pr[\bs(e) = i] = p_i$ for all $i \in \{1,\ldots,k\}$.}
  \ENDFOR{}
  \RETURN{$\bs$}
  \end{algorithmic}
\end{algorithm}
The approximation algorithms  for bisubmodular functions~\cite{ity} and more generally for $k$-submodular functions~\cite{Ward:2014wja} are specializations of Algorithm~\ref{alg:meta}, where the probability distribution is chosen to be proportional to its marginal gain.

We now evaluate the quality of the solution of Algorithm~\ref{alg:meta} by applying the analysis in~\cite{ity, Ward:2014wja}.
We first remark the following key fact (see \cite{ity, Ward:2014wja} for the proof).
\begin{proposition}
\label{prop}
For any $k$-submodular function $f:(k+1)^V\rightarrow \mathbb{R}_+$, there exists a partition of $V$ that attains the maximum value of $f$.
\end{proposition}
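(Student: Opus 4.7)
The plan is to use pairwise monotonicity (which, as noted in the excerpt, follows from $k$-submodularity and is in fact half of the characterization in Theorem~\ref{thm:char}) to argue that any maximizer can be modified, one element at a time, into a partition of $V$ without decreasing the function value. I will assume throughout that $k \geq 2$, so that there are at least two distinct indices available to apply pairwise monotonicity.

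First, I would take any maximizer $\bx^* = (X_1^*,\dots,X_k^*)$ of $f$ and let $E = V \setminus \mathrm{supp}(\bx^*)$ be the set of elements that are ``unassigned.'' The goal is to show that we can shrink $E$ to the empty set while preserving the property of being a maximizer. If $E = \emptyset$ we are done, so suppose $e \in E$. Pick any two distinct indices $i, j \in [k]$, which exist since $k \geq 2$. Pairwise monotonicity gives
\[
\Delta_{e,i}f(\bx^*) + \Delta_{e,j}f(\bx^*) \;\geq\; 0,
\]
so at least one of the two marginals is nonnegative; without loss of generality say $\Delta_{e,i}f(\bx^*) \geq 0$. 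Then the vector $\bx^{**}$ obtained from $\bx^*$ by setting its $e$-coordinate to $i$ (equivalently, adding $e$ to the $i$-th block) satisfies $f(\bx^{**}) \geq f(\bx^*)$, so $\bx^{**}$ is again a maximizer, and $|\mathrm{supp}(\bx^{**})| = |\mathrm{supp}(\bx^*)| + 1$.

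Iterating this argument at most $|V|$ times produces a maximizer whose support is all of $V$, which is precisely a partition of $V$. The only step that warrants care is the justification that pairwise monotonicity holds for an arbitrary $k$-submodular function; this is immediate from the excerpt's derivation of the pairwise monotonicity inequality directly from the definition of $k$-submodularity (by choosing $\bx$ and $\by$ to differ from the base configuration only in whether $e$ sits in block $i$ or block $j$), so no appeal to the full characterization theorem is needed. No real obstacle arises: the argument is a straightforward monotone-extension argument driven by a single structural inequality.
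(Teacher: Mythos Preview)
Your argument is correct: pairwise monotonicity guarantees that for each unassigned element some coordinate has nonnegative marginal, so one can extend any maximizer to a full partition without loss, and your explicit restriction to $k\geq 2$ is appropriate since the statement fails for $k=1$.

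The paper does not actually supply a proof of this proposition; it simply cites \cite{ity,Ward:2014wja}. Your argument is precisely the standard one found in those references, so there is nothing to contrast. If anything, one could note that the paper's own proof of the analogous Lemma~\ref{clm:maximal-solution} for $\alpha$-bisubmodular functions proceeds slightly differently---by applying the defining inequality once with judiciously chosen arguments rather than element-by-element---but for $k$-submodular functions your incremental approach via pairwise monotonicity is the natural and standard route.
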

We also need the following notation, which will be used throughout this section.
Let $n=|V|$.
By Proposition~\ref{prop} there is an optimal solution $\bo$ with ${\rm supp}(\bo)=V$.
Let $\bs$ be the output of the algorithm.
We consider the $j$-th iteration of the algorithm, and let
$e^{(j)}$ be the element of $V$ considered in the $j$-th iteration,
$p_i^{(j)}$ be the probability that $i$-th coordinate is chosen in the $j$-th iteration, and
$\bs^{(j)}$ be the solution after the $i$-th iteration, where $\bs^{(0)}={\bf 0}$.
Also for $0\leq j\leq n$ let $\bo^{(j)}=(\bo \sqcup \bs^{(j)})\sqcup \bs^{(j)}$, that is, the element in $\{0,1,\dots, k\}^V$ obtained from $\bo$ by replacing the coordinates on ${\rm supp}(\bs^{(j)})$ with those of $\bs^{(j)}$,
and for $1\leq j\leq n$ let $\bt^{(j-1)}=(\bo \sqcup \bs^{(j)})\sqcup \bs^{(j-1)}$, that is, the one obtained from $\bo^{(j)}$ by changing
$\bo^{(j)}(e^{(j)})$ with $0$.
Also for $i\in [k]$ let $y_i^{(j)}=\Delta_{e^{(j)}, i}f(\bs^{(j-1)})$
and let $a_i^{(j)}=\Delta_{e^{(j)}, i}f(\bt^{(j-1)})$.
Due to the pairwise monotonicity, we have
\begin{align}
\label{01}
y_i^{(j)}+y_{i'}^{(j)}&\geq 0 \qquad (i, i' \in [k], i\neq i'), \\ \label{02}
a_i^{(j)}+a_{i'}^{(j)}&\geq 0 \qquad (i, i' \in [k], i\neq i').
\end{align}
Also from $\bs^{(j)}\preceq \bt^{(j)}$, the orthant submodularity implies
\begin{equation}
\label{03}
y_i^{(j)}\geq a_i^{(j)} \qquad (i\in [k]).
\end{equation}

Applying the analysis in~\cite{ity, Ward:2014wja}, we have the following.
\begin{lemma}\label{lem:1}
  Let $c\in \mathbb{R}_+$.
  Conditioning on $\bs^{(j-1)}$, suppose that
  \begin{align}
  \sum_{i=1}^k (a_{i^*}^{(j)}-a_i^{(j)})p_i^{(j)}\leq c(\sum_{i=1}^k y_i^{(j)} p_i^{(j)} )  \label{1}
  \end{align}
  holds for each $j$ with $1\leq j\leq n$,
  where $i^*=\bo(e^{(j)})$.
  Then $\mathbb{E}[f(\bs)]\geq \frac{1}{1+c}f(\bo)$.
\end{lemma}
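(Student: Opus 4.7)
The plan is to run a telescoping argument along the sequence $\bo = \bo^{(0)}, \bo^{(1)}, \ldots, \bo^{(n)} = \bs$, which interpolates between the fixed optimum $\bo$ and the random output $\bs$. Note that $\bo^{(0)} = \bo$ because $\bs^{(0)} = {\bf 0}$, and $\bo^{(n)} = \bs$ because ${\rm supp}(\bs^{(n)}) = V$, so $\bo^{(n)}$ inherits all of its coordinates from $\bs^{(n)} = \bs$. Using $f \geq 0$, this gives the two identities
\[
f(\bo) - f(\bs) \;=\; \sum_{j=1}^{n}\bigl(f(\bo^{(j-1)}) - f(\bo^{(j)})\bigr), \qquad f(\bs) \;\geq\; \sum_{j=1}^{n}\bigl(f(\bs^{(j)}) - f(\bs^{(j-1)})\bigr).
\]
The goal will then be to upper-bound each increment in the first sum by $c$ times the corresponding increment in the second sum, in conditional expectation given $\bs^{(j-1)}$.

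For the per-step comparison I would fix $\bs^{(j-1)}$ and compare $\bo^{(j-1)}$ with $\bo^{(j)}$. From the definitions, these two vectors agree on every coordinate except $e^{(j)}$: at that coordinate $\bo^{(j-1)}(e^{(j)}) = \bo(e^{(j)}) = i^*$, while $\bo^{(j)}(e^{(j)}) = \bs(e^{(j)})$. Both can be obtained from $\bt^{(j-1)}$, which carries a $0$ at $e^{(j)}$, by reassigning only that coordinate. Hence if the algorithm picks $\bs(e^{(j)}) = i$ in round $j$, then
\[
f(\bo^{(j-1)}) - f(\bo^{(j)}) = a_{i^*}^{(j)} - a_i^{(j)} \qquad \text{and} \qquad f(\bs^{(j)}) - f(\bs^{(j-1)}) = y_i^{(j)}.
\]
Averaging over $i$ according to $p_i^{(j)}$ and applying the assumed inequality~\eqref{1} then yields
\[
\mathbb{E}\bigl[f(\bo^{(j-1)}) - f(\bo^{(j)}) \bigm| \bs^{(j-1)}\bigr] \;\leq\; c \cdot \mathbb{E}\bigl[f(\bs^{(j)}) - f(\bs^{(j-1)}) \bigm| \bs^{(j-1)}\bigr].
\]

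Taking total expectation, summing over $j = 1, \dots, n$, and combining with the two telescoping identities gives $f(\bo) - \mathbb{E}[f(\bs)] \leq c \cdot \mathbb{E}[f(\bs)]$, which rearranges to the desired bound $\mathbb{E}[f(\bs)] \geq \frac{1}{1+c} f(\bo)$. No serious obstacle is anticipated; the argument is essentially the bookkeeping already used in~\cite{ity, Ward:2014wja}. The only point that needs care is verifying that $\bo^{(j-1)}$ and $\bo^{(j)}$ differ exactly at $e^{(j)}$ and that both agree with $\bt^{(j-1)}$ away from $e^{(j)}$, which is precisely what the auxiliary vectors $\bt^{(j-1)}$ and the marginal-gain quantities $a_i^{(j)}, y_i^{(j)}$ are designed to encode.
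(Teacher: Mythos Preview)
Your proposal is correct and follows essentially the same telescoping argument as the paper's own proof: both identify $\mathbb{E}[f(\bo^{(j-1)})-f(\bo^{(j)})\mid \bs^{(j-1)}]=\sum_i(a_{i^*}^{(j)}-a_i^{(j)})p_i^{(j)}$ and $\mathbb{E}[f(\bs^{(j)})-f(\bs^{(j-1)})\mid \bs^{(j-1)}]=\sum_i y_i^{(j)}p_i^{(j)}$, apply~\eqref{1}, take total expectation, sum over $j$, and use $f({\bf 0})\geq 0$. A tiny wording quibble: you call the two telescoping relations ``identities'', but the second, $f(\bs)\geq\sum_j(f(\bs^{(j)})-f(\bs^{(j-1)}))$, is an inequality (the sum equals $f(\bs)-f({\bf 0})$); the mathematics is unaffected.
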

\begin{proof}
  Conditioning on $\bs^{(j-1)}$, we have
  $\mathbb{E}[f(\bo^{(j-1)})-f(\bo^{(j)})]=\sum_i (a_{i^*}^{(j)}-a_i^{(j)})p_i^{(j)}$
  and $\mathbb{E}[f(\bs^{(j)})-f(\bs^{(j-1)})]=\sum_i y_i^{(j)}p_i^{(j)}$.
  Hence, by (\ref{1}), we have  $\mathbb{E}[f(\bo^{(j-1)})-f(\bo^{(j)})]\leq c \mathbb{E}[f(\bs^{(j)})-f(\bs^{(j-1)})]$
  (without conditioning on $\bs^{(j-1)}$).
  Note also that $\bo^{(0)}=\bo$ and $\bo^{(n)}=\bs$ by definition.
  Hence
  \begin{align*}
  f(\bo)-\mathbb{E}[f(\bs)]&=\sum_{j=1}^n \mathbb{E}[f(\bo^{(j-1)})-f(\bo^{(j)})] \\
  &\leq c(\sum_{j=1}^n \mathbb{E}[f(\bs^{(j)})-f(\bs^{(j-1)})]) \\
  &=c(\mathbb{E}[f(\bs)]-f({\bf 0})) \leq c \mathbb{E}[f(\bs)],
  \end{align*}
  and we get the statement.
\end{proof}

\subsection{A $\frac{1}{2}$-approximation algorithm for non-monotone $k$-submodular functions}\label{sec:general}
In this section, we show a polynomial-time randomized $\frac{1}{2}$-approximation algorithm for maximizing $k$-submodular functions. 
Our algorithm is described in Algorithm~\ref{alg:non-monotone}.
\begin{algorithm}
  \caption{}\label{alg:non-monotone}
  \begin{algorithmic}
  \REQUIRE{A nonnegative $k$-submodular function $f:\{0,1,\ldots,k\}^V \to \bbR_+$.}
  \ENSURE{A vector $\bs\in \{0,1,\dots, k\}^V$.}
  \STATE{$\bs\leftarrow{\bf 0}$.}
  \FOR{each $e \in V$}
      \STATE{$y_i \leftarrow \Delta_{e,i}f(\bs)$ for $1\leq i\leq k$.}
    \STATE{Assume $y_1\geq y_2 \geq \dots \geq y_k$.}
    \STATE{$i^+ \leftarrow
    \begin{cases}
     \text{the maximum integer $i$ such that $y_i> 0$} & \text{if $y_1>0$}, \\
     0 & \text{otherwise}.
     \end{cases}$
     }
    \IF{$i^+\leq 1$}
    	\STATE{$p_i\leftarrow 
    \begin{cases} 
    1 & \text{if $i=1$} \\ 
    0 & \text{otherwise}
    \end{cases} \qquad (1\leq i\leq k).
    $}
%
    \ELSIF{$i^+ =2$}
    	\STATE{$p_i\leftarrow 
    \begin{cases} 
    \frac{y_i}{y_1+y_2} & \text{if $i\in \{1,2\}$} \\ 
    0 & \text{otherwise}
    \end{cases} \qquad (1\leq i\leq k).
    $}
    \ELSE{}
    \STATE{$p_i\leftarrow 
    \begin{cases} 
    (\frac{1}{2})^i & \text{if $i\leq i^+-1$} \\ 
    (\frac{1}{2})^{i^+-1} & \text{if $i=i^+$} \\ 
    0 & \text{otherwise}
    \end{cases} \qquad (1\leq i\leq k).
    $}
    \ENDIF
    \STATE{Let $\bs(e) \in \{1,\ldots,k\}$ be chosen randomly, with $\Pr[\bs(e) = i] = p_i$ for all $i \in \{1,\ldots,k\}$.}
    \ENDFOR
  \RETURN{$\bs$}
  \end{algorithmic}
\end{algorithm}


\begin{theorem}
\label{thm:non-monotone}
Let $\bo$ be a maximizer of a $k$-submodular function $f$ and let $\bs$ be the output of Algorithm~\ref{alg:non-monotone}.
Then  $\mathbb{E}[f(\bs)]\geq  \frac{1}{2}f(\bo)$.
\end{theorem}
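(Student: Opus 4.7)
The strategy is to apply Lemma~\ref{lem:1} with $c=1$, which reduces proving $\mathbb{E}[f(\bs)]\ge\tfrac{1}{2}f(\bo)$ to verifying the pointwise inequality
\[
\sum_{i=1}^k (a_{i^*}-a_i)\,p_i \;\le\; \sum_{i=1}^k y_i\,p_i
\]
at each iteration, where throughout I may assume the sorted ordering $y_1\ge y_2\ge\cdots\ge y_k$ that Algorithm~\ref{alg:non-monotone} produces. The structural inputs are orthant submodularity $a_i\le y_i$ from (\ref{03}) and pairwise monotonicity $y_i+y_{i'}\ge 0$, $a_i+a_{i'}\ge 0$ from (\ref{01}), (\ref{02}). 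The argument splits along the three branches of the algorithm.

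In the branch $i^+\le 1$, pairwise monotonicity applied to any pair in $\{2,\ldots,k\}$, combined with $y_i\le 0$ for $i\ge 2$, forces $y_i=0$ and similarly $a_i=0$ for $i\ge 2$, while $a_1\ge 0$. Since $p_1=1$, the target reduces to $a_{i^*}-a_1\le y_1$, which is immediate. In the branch $i^+=2$, multiplying through by $y_1+y_2$ yields the equivalent
\[
a_{i^*}(y_1+y_2)-a_1y_1-a_2y_2 \;\le\; y_1^2+y_2^2.
\]
For $i^*\in\{1,2\}$, the bound $a_{i^*}-a_{3-i^*}\le 2y_{i^*}$ (derived from $a_{i^*}\le y_{i^*}$ and $a_1+a_2\ge 0$) reduces the inequality to the AM--GM bound $2y_1y_2\le y_1^2+y_2^2$; for $i^*\ge 3$, $a_{i^*}\le 0$ and pairwise monotonicity against $i^*$ forces $a_1,a_2\ge 0$, making both $a_{i^*}(y_1+y_2)$ and $-a_1y_1-a_2y_2$ nonpositive.

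The principal case is $i^+\ge 3$. The distribution $p_i=2^{-i}$ for $i<i^+$ and $p_{i^+}=2^{-(i^+-1)}$ is chosen precisely so that the telescoping identity
\[
\sum_{i>\ell}p_i \;=\; p_\ell\qquad(1\le\ell<i^+),\qquad\text{equivalently}\qquad \sum_{i\le\ell}p_i \;=\; 1-p_\ell,
\]
holds. When $i^*>i^+$, $a_{i^*}\le y_{i^*}\le 0$ while pairwise monotonicity with $i^*$ forces $a_i\ge -a_{i^*}\ge 0$ for all $i\le i^+$, so the left-hand side is bounded by $2a_{i^*}\le 0$. When $i^*\le i^+$, I extremize the left-hand side over the feasible $a$-polytope: the maximizer has $a_{i^*}=y_{i^*}$ and, since pairwise monotonicity permits at most one other coordinate $a_{i_0}$ to be strictly negative, the adversary's optimal choice is $i_0=1$ (the largest $p_{i_0}$); maximizing $-a_{i_0}=m$ subject to $m\le y_{i^*}$ and $m\le\min_{j\ne 1,i^*}y_j$ (the latter forcing $m\le 0$ whenever $i^+<k$) reduces the inequality to a weighted comparison of the $y_i$'s, and the telescoping identity together with $y_1\ge\cdots\ge y_{i^+}$ then gives
\[
\sum_i y_i p_i \;\ge\; y_{i^*}(1-p_{i^*}) + y_{i^+}p_{i^*},
\]
which matches or exceeds the extremal left-hand side.

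The principal obstacle is this last sub-case. The naive estimate $a_i\ge -a_{i^*}$, which sufficed in the easier branches, is too weak here because it allows $a_{i^*}$ to appear linearly on the wrong side; one has to exploit the additional constraint $a_{i_0}\ge -y_{i_0}$ (for the one ``negative'' coordinate), the sorted order of the $y_i$'s, and the precise telescoping form of the chosen $p$ to make the bookkeeping close. The specific choice of a geometric distribution is what makes the adversary's potential ``free gain'' from a single negative $a_{i_0}$ exactly cancellable against the descending tail of the $y_i$'s, yielding the ratio $\tfrac{1}{2}$ independently of $k$.
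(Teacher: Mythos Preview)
Your overall plan—reduce to Lemma~\ref{lem:1} with $c=1$ and split along the three branches of the algorithm—is the same as the paper's, and your treatment of $i^+\le 1$ and $i^+=2$ is essentially equivalent (with the minor caveat that your ``$y_i=0$ for $i\ge 2$'' argument needs a pair in $\{2,\dots,k\}$ and so does not literally apply when $k=2$; the conclusion there is still easy, as in the paper).

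Where you diverge is the principal case $i^+\ge 3$. The paper does \emph{not} extremize over the $a$-polytope. Instead it works with the actual values, introduces $r\in\arg\min_i a_i$, and bounds $\sum_i(y_i+a_i)p_i$ directly: using $y_i\ge a_{i^*}$ for $i\le i^*$ it splits the sum so that one piece telescopes exactly to $a_{i^*}$ via the geometric identity, and then shows the remaining piece is nonnegative by a short case analysis on whether $r<i^*$ or $r>i^*$ (using $p_r\le\tfrac12$ in the first and $p_r\le\tfrac14$ in the second). Your extremization route is a reasonable alternative idea, but as written it has real gaps in exactly this case:
\begin{itemize}
\item You set the adversary's negative coordinate at $i_0=1$, but give no argument that this is the optimum of the LP $\max\{a_{i^*}-\sum_i a_i p_i:\ a_i\le y_i,\ a_i+a_j\ge 0\}$. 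The heuristic ``largest $p_{i_0}$'' ignores the coupled constraint that every other $a_j$ must then be at least $m$; one has to compute the sign of $1-2p_{i_0}-p_{i^*}$ and check it is negative only for $i_0=1$.
\item More seriously, if $i^*=1$ the choice $i_0=1$ is unavailable, and your argument says nothing about this sub-case. (It is in fact the easiest one—the adversary can do no better than $i_0=2$, which yields zero gain—but it must be handled.)
\item The final displayed inequality $\sum_i y_ip_i\ge y_{i^*}(1-p_{i^*})+y_{i^+}p_{i^*}$ is correct, but the reduction of the extremal LHS to $y_{i^*}(1-p_{i^*})+mp_{i^*}$ with $m\le y_{i^+}$ relies on computations (e.g.\ $2p_1-1+p_{i^*}=p_{i^*}$) that you suppress.
\end{itemize}
These holes are fillable—your approach can be made to work—but the paper's $r$-based argument gets there with less machinery and no optimality claims to verify.
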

\begin{proof}
By Lemma~\ref{lem:1} it suffices to prove (\ref{1}) for every $1\leq j\leq n$ for $c=1$.
For simplicity of the description we shall omit the superscript $(j)$ if it is clear from the context.
Our goal is to show  
\begin{equation}
\label{2}
\sum_{1\leq i\leq k} (y_i+a_i)p_i \geq a_{i^*},
\end{equation}
which is equivalent to (\ref{1}) with $c=1$.
Recall that $y_i+y_{i'}\geq 0$ and $a_i+a_{i'}\geq 0$ for $i, i'\in [k]$ with $i\neq i'$, and 
$y_i\geq a_i$ for $i\in [k]$ (c.f.~(\ref{01}), (\ref{02}) and (\ref{03})).

If $i^+\leq 1$, then we need to show $a_1+y_1\geq a_{i^*}$. 
Since $y_i+y_{i'}\geq 0$ for $i, i'\in [k]$, we have $y_1\geq 0$.
Hence $a_1+y_1\geq a_{i^*}$ holds if $i^*=1$.
If $i^*\neq 1$ then $0\geq y_{i^*}\geq a_{i^*}$, and hence $a_1\geq 0$ by $a_1+a_{i^*}\geq 0$.
This implies $a_1+y_1\geq 0\geq a_{i^*}$.

If $i^+=2$, we need to show $(a_1+y_1)y_1+(a_2+y_2)y_2\geq a_{i^*}(y_1+y_2)$.
Now $(a_1+y_1)y_1+(a_2+y_2)y_2=a_1y_1+a_2y_2+(y_1-y_2)^2+2y_1y_2\geq a_1y_1+a_2y_2+2y_1y_2$.
If $i^*=1$, then $a_1y_1+a_2y_2+2y_1y_2\geq a_1(y_1+y_2)+(a_2+a_1)y_2\geq a_1(y_1+y_2)$
as required.
By a symmetric calculation the claim follows if $i^*=2$.
If $i^*\geq 3$, then $0\geq y_{i^*}\geq a_{i^*}$, and hence $a_1\geq 0, a_2\geq 0$.
We thus have $(a_1+y_1)y_1+(a_2+y_2)y_2\geq 0\geq a_{i^*}(y_1+y_2)$.

Hence assume $i^+\geq 3$.
Note that 
\begin{equation}
\label{eq:a-1}
y_i\geq y_{i^*}\geq a_{i^*} \qquad \text{for $i\leq i^*$}.
\end{equation}
Let $r\in {\rm argmin}\{a_i\mid i\in [k]\}$. Such $r$ is unique if $a_r<0$.

If $r=i^*$, we have $\sum_i a_ip_i\geq a_{i^*}(\sum_i p_i)=a_{i^*}$.
Since $\sum_i y_ip_i\geq 0$, (\ref{2}) follows.
Hence we assume $r\neq i^*$.

If $i^*\geq i^+$, we have $\sum_i y_ip_i=\sum_{i\leq i^+} y_ip_i\geq \sum_{i\leq i^+} a_{i^*}p_i=a_{i^*}$ by (\ref{eq:a-1})
and $\sum_i a_ip_i=\sum_{i\neq r} a_i p_i+a_rp_r\geq 0$ by $\sum_{i\neq r} p_i\geq p_r$
and $a_i+a_r\geq 0$ for $i\neq r$.
Therefore (\ref{2}) holds.
We thus assume $i^*<i^+$.

Now  we have
\begin{align}
\nonumber
\sum_i (y_i+a_i)p_i&\geq \sum_{i\leq i^*} a_{i^*}p_i+\sum_{i>i^*}a_ip_i+\sum_i a_ip_i \\
&=\left(\sum_{i<i^*} a_{i^*} p_i + 2a_{i^*}p_{i^*}\right)+\left(\sum_{i>i^*}a_ip_i+\sum_{i\neq r, i^*}a_ip_i +a_rp_r\right).
\label{3}
\end{align}
For the first term we have
\begin{align*}
\sum_{i<i^*} a_{i^*} p_i + 2a_{i^*}p_{i^*}&=a_{i^*}\left(\sum_{i<i^*}p_i+2p_{i^*}\right) 
=a_{i^*}\left(1-\left(\frac{1}{2}\right)^{i^*-1}+2\cdot \left(\frac{1}{2}\right)^{i^*}\right)=a_{i^*}.
\end{align*}
Hence it suffices to show that the second term of (\ref{3}) is nonnegative.
This is trivial if $a_r\geq 0$. Hence assume $a_r<0$.
Since $i^*< i^+$, we have
\begin{equation}
\label{eq:a-3}
\sum_{i>i^*} p_i+\sum_{i\neq r, i^*} p_i=\left(\frac{1}{2}\right)^{i^*}+\sum_{i\neq r, i^*} p_i
=p_{i^*}+\sum_{i\neq r, i^*} p_i
=1-p_r.
\end{equation}
Therefore, if $r<i^*$,
we get
\begin{equation*}
\sum_{i>i^*}a_ip_i+\sum_{i\neq r, i^*}a_ip_i+a_rp_r\geq
a_r\left(p_r-\sum_{i>i^*}p_i-\sum_{i\neq r, i^*}p_i\right)
=a_r(p_r-(1-p_r))
=a_r(2p_r-1)\geq 0,
\end{equation*}
where the first inequality follows from $a_i+a_r\geq 0$ for $i\neq r$,
the second equality follows from (\ref{eq:a-3}),
and the fourth follows from $a_r<0$ and $p_r\leq 1/2$. 
Hence we further assume $r>i^*$.
Then $p_r\leq 1/4$ by $r\neq 1$ and $i^+\geq 3$. Hence, by $a_r<0$,
\begin{align*}
&\sum_{i>i^*}a_ip_i+\sum_{i\neq r,i^*}a_ip_i +a_rp_r\geq
\sum_{i>i^*, i\neq r}a_ip_i+\sum_{i\neq r, i^*}a_ip_i+2a_rp_r
\geq a_r\left(2p_r-\sum_{i>i^*, i\neq r}p_i-\sum_{i\neq r, i^*} p_i\right) \\
&=a_r(2p_r-(p_{i^*}-p_r)-(1-p_r-p_{i^*}))
=a_r(4p_r-1)\geq 0.
\end{align*}
Thus we conclude that the second term of (\ref{3}) is nonnegative and (\ref{2}) holds.
\end{proof}

\subsection{A $\frac{k}{2k-1}$-approximation algorithm for monotone $k$-submodular functions}\label{sec:monotone}

In this section, we show a polynomial-time randomized $\frac{k}{2k-1}$-approximation algorithm for maximizing monotone $k$-submodular functions.
Our algorithm is described in Algorithm~\ref{alg:monotone}.
We note that a similar algorithm and analysis appeared in~\cite{Dobzinski:2006wi} for the submodular welfare problem, which is a special case of the monotone $k$-submodular function maximization problem.

\begin{algorithm}[t]
  \caption{}\label{alg:monotone}
  \begin{algorithmic}
  \REQUIRE{A monotone $k$-submodular function $f:\{0,1,\ldots,k\}^V \to \bbR_+$.}
  \ENSURE{A vector $\bs\in \{0,1,\dots, k\}^V$.}
  \STATE{$\bs\leftarrow{\bf 0}$.}
  \STATE{$t \leftarrow k-1$.}
  \FOR{each $e \in V$}
      \STATE{$y_i \leftarrow \Delta_{e,i}(\bs)$ for $1\leq i \leq k$.}
    \STATE{$\beta \leftarrow \sum\limits_{i=1}^k y_i^t$}
    \IF{$\beta \neq 0$}
       	\STATE{$p_i\leftarrow \frac{y_i^t}{\beta}$ \ ($1\leq i \leq k$).}
    \ELSE{}
    	\STATE{$p_i\leftarrow
    \begin{cases}
    1 & \text{if $i=1$} \\
    0 & \text{otherwise}
    \end{cases} \qquad (1\leq i\leq k).
    $}
    \ENDIF{}
        \STATE{Let $\bs(e) \in \{1,\ldots,k\}$ be chosen randomly, with $\Pr[\bs(e) = i] = p_i$ for all $i \in \{1,\ldots,k\}$.}

  \ENDFOR{}
  \RETURN{$\bs$}
  \end{algorithmic}
\end{algorithm}


It is clear that Algorithm~\ref{alg:monotone} runs in polynomial time.
Below we consider the approximation ratio of Algorithm~\ref{alg:monotone}.
\begin{theorem}
\label{thm:monotone}
Let $\bo$ be a maximizer of a monotone nonnegative $k$-submodular function $f$ and let $\bs$ be the output of Algorithm~\ref{alg:monotone}.
Then  $\mathbb{E}[f(\bs)]\geq  \frac{k}{2k-1}f(\bo)$.
\end{theorem}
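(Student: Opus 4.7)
The plan is to apply Lemma~\ref{lem:1} with $c = (k-1)/k$, so that $1/(1+c) = k/(2k-1)$ is the resulting approximation ratio. Suppressing the superscript $(j)$ and writing $i^{*} = \bo(e^{(j)})$, the target reduces to verifying the per-iteration inequality
\[
\sum_{i=1}^{k} (a_{i^{*}} - a_{i})\, p_{i} \;\leq\; \frac{k-1}{k}\sum_{i=1}^{k} y_{i}\, p_{i},
\]
with $p_{i} = y_{i}^{k-1}/\beta$ in the generic case.

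First I would exploit monotonicity of $f$: every marginal gain is nonnegative, so $y_{i}\geq 0$ and $a_{i}\geq 0$ for all $i$. Together with (\ref{03}), the degenerate case $\beta=0$ is immediate, since then all $y_{i}$ and hence all $a_{i}$ vanish. In the main case $\beta>0$, I would bound the left-hand side using $\sum_{i} p_{i}=1$ and $a_{i}\geq 0$:
\[
\sum_{i=1}^{k}(a_{i^{*}}-a_{i})p_{i} \;=\; a_{i^{*}}(1-p_{i^{*}}) - \sum_{i\neq i^{*}} a_{i}p_{i} \;\leq\; a_{i^{*}}(1-p_{i^{*}}) \;\leq\; y_{i^{*}}(1-p_{i^{*}}),
\]
where the last step uses (\ref{03}). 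Inserting $p_{i}=y_{i}^{k-1}/\beta$, the target reduces to the purely algebraic statement
\[
y_{i^{*}}\sum_{i\neq i^{*}} y_{i}^{k-1} \;\leq\; \frac{k-1}{k}\sum_{i=1}^{k} y_{i}^{k}.
\]

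This last inequality follows from weighted AM-GM applied termwise: for each $i\neq i^{*}$, one has
\[
y_{i^{*}}\cdot y_{i}^{k-1} \;\leq\; \frac{k-1}{k}\, y_{i}^{k} + \frac{1}{k}\, y_{i^{*}}^{k}.
\]
Summing over the $k-1$ indices $i\neq i^{*}$ produces $\tfrac{k-1}{k}\sum_{i\neq i^{*}} y_{i}^{k} + \tfrac{k-1}{k}\, y_{i^{*}}^{k} = \tfrac{k-1}{k}\sum_{i} y_{i}^{k}$, exactly the claim.

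The only real insight---and what I expect to be the main conceptual obstacle for someone not already looking at the algorithm---is the match between the exponent $t=k-1$ in the sampling distribution and the number $k-1$ of indices in the sum $\sum_{i\neq i^{*}}$: the former dictates the AM-GM weights $\bigl((k-1)/k,\; 1/k\bigr)$, while the latter collects exactly the right number of $\tfrac{1}{k}\, y_{i^{*}}^{k}$ contributions to recover the missing $i=i^{*}$ summand on the right-hand side. Any other exponent gives a strictly weaker bound, which is what makes the choice $t=k-1$ (and the ratio $k/(2k-1)$) tight for this argument.
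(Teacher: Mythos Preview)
Your proof is correct and follows the same overall framework as the paper: apply Lemma~\ref{lem:1} with $c=(k-1)/k$, use monotonicity to get $a_i\geq 0$, and reduce via $a_{i^*}\leq y_{i^*}$ to the pure inequality $y_{i^*}\sum_{i\neq i^*}y_i^{k-1}\leq\frac{k-1}{k}\sum_i y_i^k$.

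The one genuine difference is in how you dispatch that final inequality. The paper applies weighted AM-GM once to the product $\gamma y_{i^*}\cdot\sum_{i\neq i^*}y_i^t$ (with $\gamma=(k-1)^{1/t}$), which leaves a term $(\sum_{i\neq i^*}y_i^t)^{(t+1)/t}$ that is then bounded by H\"older's inequality. Your termwise application of AM-GM to each product $y_{i^*}\,y_i^{k-1}$ is more direct: it avoids H\"older entirely and makes transparent why the exponent $t=k-1$ is exactly right, since the $k-1$ copies of $\tfrac{1}{k}y_{i^*}^k$ reconstruct the missing summand. Both arguments yield the same bound, but yours is the cleaner of the two.
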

\begin{proof}
  By Lemma~\ref{lem:1} it suffices to prove~\eqref{1} for every $1\leq j\leq n$ for $c=1-\frac{1}{k}$.
  For simplicity of the description we shall omit the superscript $(j)$ if it is clear from the context.

  We first consider the case $\beta = 0$.
  Since $f$ is monotone, we have $y_i = a_i = 0$ for all $1 \leq i \leq k$.
  Hence, \eqref{1} clearly holds with $c = 1-\frac{1}{k}$.

  Now suppose $\beta > 0$.
  Our goal is to show
  \begin{align}\label{eq:monotone}
    \sum_{1 \leq i \leq k}y_i^t(a_{i^*} - a_i) \leq \Bigl(1-\frac{1}{k}\Bigr)\sum_{1\leq i\leq k} y_i^{t+1}.
  \end{align}
  If $k=1$, then~\eqref{eq:monotone} follows since $i^*=1$ and both sides are equal to zero.
  Hence we assume $k\geq 2$.
  Let $\gamma = (k-1)^{\frac{1}{t}}= t^{\frac{1}{t}}$.
  Since $f$ is a monotone $k$-submodular function,
  we have  that $a_i \geq 0$ for all $i \in \{1,\ldots,k\}$.
  Then, we have
  \begin{align}
    & \sum_{i \neq i^*}y_i^t (a_{i^*}-a_i)
    \leq
    \sum_{i\neq i^*}y_i^t a_{i^*}
    \leq
    \sum_{i \neq i^*}y_i^t y_{i^*}
    =
    \frac{1}{\gamma} \biggl( \gamma y_{i^*} \cdot \sum_{i \neq i^*}y_i^t \biggr). \label{eq:monotone-1}
  \end{align}
  From the weighted AM-GM inequality, $a^{\frac{1}{t+1}}b^{\frac{t}{t+1}} \leq\frac{1}{t+1}a + \frac{t}{t+1}b$ holds for all $a,b \geq0$.
  By setting $a = (\gamma y_{i^*})^{t+1}$ and $b = (\sum_{i \neq i^*}y_i^t)^{(t+1)/t}$, we have
  \begin{align}
    \eqref{eq:monotone-1} \leq
    \frac{1}{\gamma} \biggl( \frac{1}{t+1}(\gamma y_{i^*})^{t+1} + \frac{t}{t+1}\Bigl(\sum_{i \neq i^*}y_i^t\Bigr)^{\frac{t+1}{t}}   \biggr). \label{eq:monotone-2}
  \end{align}
  From H{\"o}lder's inequality, $\sum_i a_i \leq (\sum_i a_i^{\frac{t+1}{t}})^{\frac{t}{t+1}} (\sum_i 1^{t+1})^{\frac{1}{t+1}}$ holds for any non-negative $a_i$'s.
  By setting $a_i = y_i^t$, we have
  \begin{align*}
    \eqref{eq:monotone-2} \leq &
    \frac{1}{\gamma} \biggl( \frac{1}{t+1}(\gamma y_{i^*})^{t+1} + \frac{t}{t+1}\Bigl(\sum_{i \neq i^*}y_i^{t+1}\Bigr)  \cdot \Bigl(\sum_{i \neq i^*}1^{t+1}\Bigr)^{\frac{1}{t} }  \biggr)
    \\
    = &
    \frac{1}{\gamma} \biggl(\frac{1}{t+1} (\gamma y_{i^*})^{t+1} + \frac{t(k-1)^{1/t}}{t+1}\sum_{i \neq i^*}y_i^{t+1}  \biggr) \\
    = &
    \frac{\gamma^t}{t+1}\sum_{i}y_i^{t+1}
    = \Bigl(1-\frac{1}{k}\Bigr) \sum_{i}y_i^{t+1}.
  \end{align*}
  Thus we established (\ref{eq:monotone}) and we have $k/(2k-1)$-approximation by Lemma~\ref{lem:1}.
\end{proof}

\section{Inapproximability}\label{sec:inapprox}

As we remarked in the introduction, for a symmetric submodular function $f:2^V\rightarrow \mathbb{R}_+$, a function $g:\{0,1\dots, k\}^V\rightarrow \mathbb{R}_+$ defined by
\[
 g(X_1,\dots, X_k)=\sum_{i=1}^k f(X_i)\qquad \text{( $\{X_1,\dots, X_k\}\in \{0,\dots, k\}^V$)}
\]
is $k$-submodular.
Hence one can consider an approximation algorithm for maximizing $f$ by applying an $\alpha$-approximation algorithm for $k$-submodular functions to $g$ and then returning $X_i\in {\rm argmax}\{f(X_j)\mid j\in [k]\}$ for output $(X_1,\dots, X_k)$ of the approximation algorithm.
Let $(X_1^*,\dots, X_k^*)$ be a maximizer of $g$ and $X^*$ be a maximizer of $f$.
Since $f$ is symmetric, we have $g(X_1^*,\dots, X_k^*)\geq 2f(X^*)$.
Therefore we  have $kf(X_i)\geq \sum_j f(X_j)=g(X_1,\dots, X_k)\geq \alpha g(X_1^*,\dots, X_k^*)\geq 2\alpha f(X^*)$.
Thus it gives a $2\alpha/k$-approximation algorithm for the symmetric submodular function maximization.

It was proved by Feige, Mirrokni, and Vondr\'{a}k~\cite{Feige:2011iu} that
any approximation algorithm for symmetric submodular functions with polynomial queries cannot achieve the approximation ratio better that $1/2$.
This implies that the best approximation ratio for the $k$-submodular maximization problem is at most  $\alpha\leq k/4$.
This argument, via embedding of a symmetric submodular function to a $k$-submodular function, gives the tight approximation bound for bisubmodular function, but for $k\geq 4$ it does not give a nontrivial bound.

Instead of embedding submodular functions to $k$-submodular functions,  in this section we shall directly extend the argument of~\cite{Feige:2011iu}  and establish the following bound.

\begin{theorem}
\label{thm:hardness}
For any $\varepsilon > 0$, a $(\frac{k+1}{2k}+\varepsilon)$-approximation for the monotone $k$-submodular function maximization problem would require exponentially many queries.
\end{theorem}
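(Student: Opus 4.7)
The plan is to adapt the symmetry-gap methodology of Feige, Mirrokni, and Vondr\'ak~\cite{Feige:2011iu} to the $k$-submodular setting. Choose a uniformly random ordered partition $\pi = (V_1,\ldots,V_k)$ of $V$ into equal parts of size $n/k$ and construct a nonnegative monotone $k$-submodular function $F_\pi$ whose maximum is witnessed by the planted assignment $\bx^\star = (V_1,\ldots,V_k)$ and equals $(1-o(1))\,n$, together with a $\pi$-independent ``blind'' function $\bar F$ satisfying $\max \bar F \leq \bigl(\tfrac{k+1}{2k}+o(1)\bigr)\,n$. The hardness then follows from the principle that polynomial-query algorithms cannot effectively distinguish $F_\pi$ from $\bar F$: up to $o(n)$ loss, any such algorithm is reducible to a blind one whose output value is bounded by $\max \bar F$.

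For the construction, a natural template is
\[
F_\pi(\bx) = \phi\bigl(a_\pi(\bx)\bigr) + \tfrac{1}{2}\,\phi\bigl(|\mathrm{supp}(\bx)|-a_\pi(\bx)\bigr), \qquad a_\pi(\bx) := \sum_{i=1}^k |X_i \cap V_i|,
\]
for a suitable monotone function $\phi:\mathbb{R}_{\geq 0}\to\mathbb{R}_{\geq 0}$. The two aggregate statistics $a_\pi$ and $|\mathrm{supp}|-a_\pi$ are coordinate-wise non-decreasing in $\bx$, so for concave $\phi$ each summand is orthant submodular, and pairwise monotonicity is immediate because both summands are individually non-decreasing in every coordinate; Theorem~\ref{thm:char} then delivers monotone $k$-submodularity. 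The intended gap calculation is: on the planted assignment $a_\pi(\bx^\star)=n$ yields $F_\pi(\bx^\star)=\phi(n)$, while averaging over label-permutations replaces $a_\pi(\bx)$ by its expectation $|\mathrm{supp}(\bx)|/k$, giving (via Jensen) $\bar F(\bx)\leq \phi(|\mathrm{supp}|/k)+\tfrac{1}{2}\phi((k-1)|\mathrm{supp}|/k)$. This is maximized at $|\mathrm{supp}|=n$ with value $\tfrac{k+1}{2k}\phi(n)+o(\phi(n))$ when $\phi$ behaves linearly on the relevant range, since $\tfrac{1}{k}\cdot 1 + \tfrac{k-1}{k}\cdot\tfrac{1}{2}=\tfrac{k+1}{2k}$.

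Concentration is standard: $a_\pi(\bx)$ is a sum of negatively-associated Bernoulli indicators coming from the uniform balanced random partition, with $E[a_\pi(\bx)]=|\mathrm{supp}(\bx)|/k$. A McDiarmid/Azuma argument yields $|a_\pi(\bx)-|\mathrm{supp}(\bx)|/k|=O(\sqrt{n\log(1/\delta)})$ with probability $\geq 1-\delta$, and Lipschitz continuity of $\phi$ converts this into $|F_\pi(\bx)-\bar F(\bx)|=o(n)$ with probability $1-o(1)$ after a union bound over polynomially many queries, taking $\delta=n^{-\omega(1)}$.

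The principal obstacle is twofold. First, converting per-query concentration into an approximation lower bound against an adaptive algorithm $\mathcal{A}$ requires a noise-smoothing/coupling argument in the style of~\cite{Feige:2011iu}: run $\mathcal{A}$ against an oblivious oracle returning $\bar F$-values and couple it with the true $F_\pi$-run by injecting $\omega(\sqrt{n\log n})$-magnitude independent mean-zero noise on each answer, which erases the $O(\sqrt{n\log n})$ oracle gap at an $o(n)$ cost in approximation; the oblivious run is $\pi$-independent with output value bounded by $\max\bar F$. Second, the elementary two-term template above faces a tension between concavity of $\phi$ (needed for $k$-submodularity) and uninformativeness of single-element queries (which would otherwise leak $\pi$ in $O(nk)$ queries because $F_\pi(\mathbf{1}_e \cdot i)$ differs for $e\in V_i$ vs. $e\notin V_i$ whenever $\phi(1)>0$). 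Resolving this may require replacing the elementary construction by a multilinear-extension-based argument in the spirit of Vondr\'ak's general symmetry-gap theorem, or by a thresholded variant of $F_\pi$ that transmits information only above macroscopic support sizes --- but once this is in place, the gap $\tfrac{k+1}{2k}+o(1)$ persists and the reduction concludes $\alpha\leq\tfrac{k+1}{2k}+\varepsilon$ for any approximation ratio $\alpha$ attained with polynomially many queries.
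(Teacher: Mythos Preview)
Your proposal identifies the right framework---a symmetry-gap argument in the style of Feige, Mirrokni, and Vondr\'ak---but it is not a complete proof, and the gap you yourself flag is the heart of the matter. With the linear $\phi$ you need in order to land on the ratio $\tfrac{k+1}{2k}$, your template becomes
\[
F_\pi(\bx)=a_\pi(\bx)+\tfrac12\bigl(|\mathrm{supp}(\bx)|-a_\pi(\bx)\bigr)=\tfrac12\bigl(|\mathrm{supp}(\bx)|+a_\pi(\bx)\bigr),
\]
so $F_\pi(e\mapsto i)=1$ if $e\in V_i$ and $\tfrac12$ otherwise. An adaptive algorithm recovers $\pi$ in $nk$ queries, and no noise-smoothing rescues this: the exact answers satisfy algebraic identities (e.g.\ $\sum_i F_\pi(e\mapsto i)=\tfrac{k+1}{2}$) that the algorithm can cross-check, so injected noise is detectable. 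Saying ``a thresholded variant \ldots\ once this is in place'' defers precisely the step that carries all the work: one must exhibit a threshold that kills the information on typical queries while (i) remaining nonnegative, monotone, and $k$-submodular, and (ii) preserving the $\tfrac{k+1}{2k}$ gap. Neither is automatic; thresholding naively destroys orthant submodularity.

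The paper's proof resolves exactly this. It takes a partition-independent base function $f(\bx)$ depending only on $n_0(\bx)=n-|\mathrm{supp}(\bx)|$, and adds, for each pair $a<b$, a thresholded quadratic correction
\[
h^{a,b}_\calP(\bx)=\bigl(\max\{|d_a(\bx)-d_b(\bx)|-\varepsilon n,\,0\}\bigr)^2,
\qquad d_j(\bx)=\sum_i |X_i\cap A_{j+i-1}|,
\]
so that $g_\calP=f$ \emph{exactly} whenever all $|d_a-d_b|<\varepsilon n$; this eliminates the need for any noise-coupling and reduces indistinguishability to a single Hoeffding bound per pair $(a,b)$. The technical core is the explicit verification (Claim~\ref{claim:hard1} in the paper) that $g_\calP$ is nonnegative, monotone, and $k$-submodular---a computation that hinges on pairing each correction term with a $2n_0(\bx)$ from the base function to keep the marginals non-increasing. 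The gap then follows from $\max f=(k+1+2k\varepsilon)n^2$ versus $g_\calP(A_1,\ldots,A_k)\geq 2k(1-O(\varepsilon))n^2$. Your proposal has the right skeleton but is missing this construction and its verification, which is where the proof actually lives.
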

\begin{proof}
For simplicity we assume that $\varepsilon$ is rational.
Let $V$ be a finite set with $n=|V|$ such that  $\varepsilon n$ is an integer.
The framework of the proof is from~\cite{Feige:2011iu} (see also~\cite{Vondrak:2009hr}) and it proceeds as follows.
We shall define a $k$-submodular function $f$ and a $k$-submodular function $g_\calP$ for each  $k$-partition $\calP=\{A_1, \dots, A_k\}$ of $V$, where a $k$-partition means a partition of $V$  into $k$ subsets.
Those functions look the same as long as queries are ``balanced'' (whose definition will be given below). Suppose $\calP$ is randomly taken in the sense that  each element is added to one of the $k$ parts uniformly at random.
Then it turns out that with high probability all queries would be balanced as long as the number of queries is polynomial in $k$ and $n$.
In particular, we cannot get any information about $\calP$.
Thus one cannot distinguish $f$ and $g_\calP$ by any deterministic algorithm with a polynomial number of queries.
Hence, by Yao's min-max principle, any (possibly, randomized) algorithm with a polynomial number of queries cannot distinguish $f$ and $g_\calP$ and cannot achieve an approximation ratio better than $\frac{\max_{\bx} f(\bx)}{\max_{\bx} g_\calP(\bx)}$, which will be  $\frac{k+1}{2k}$.

Now we define $f$ and $g_\calP$.
For $\bx=(X_1,\dots, X_k)\in \{0, 1, \dots, k\}^V$, let
$n_0(\bx)=|V\setminus\bigcup_{i=1}^k X_i|$.
We define $f:\{0,\dots, k\}^V\rightarrow \bbZ_+$ by
\begin{align*}
  f(\bx)=(k+1+2k\varepsilon)n^2-(k-1)n_0(\bx)^2-2(1+k\varepsilon)nn_0(\bx) \qquad (\bx\in \{0,\dots, k\}^V).
\end{align*}
To define $g_\calP$, take any $k$-partition $\calP= \{A_1, \dots, A_k\}$.
For $\bx=(X_1,\dots, X_k)$, let $c_{i,j}(\bx)=|X_i\cap A_j|$ for $1\leq i\leq k$ and $1\leq j\leq k$,
and let $d_j(\bx)=\sum_{i=1}^k c_{i,j+i-1}$ for $1\leq j\leq k$, where the index is taken modulo $k$ ($0$ is regarded as $k$).
Then
$g_\calP:\{0,\dots, k\}^V\rightarrow \bbZ_+$ is defined by
\begin{align*}
  g_\calP(\bx)=f(\bx)+\sum_{1\leq a<b\leq k} h_\calP^{a,b}(\bx) \qquad (\bx\in \{0,\dots, k\}^V),
\end{align*}
where
\begin{align*}
  h_\calP^{a,b}(\bx):=(\max\{|d_a(\bx)-d_b(\bx)|-\varepsilon n,0\})^2.
\end{align*}
The properties of $f$ and $g_\calP$ are listed in the following claims.
\begin{claim}\label{claim:hard1}
  $f$ and $g_\calP$ for each $k$-partition $\calP$ of $V$ are nonnegative  monotone $k$-submodular functions.
\end{claim}
\begin{proof}
Clearly they are nonnegative.
To see the monotonicity and $k$-submodularity of $f$, take any $\bx$ and $e\notin \mathrm{supp}(\bx)$.
Then
\[
  \Delta_{e,i} f(\bx)=2(k-1)n_0(\bx)+2(1+k\varepsilon)n-1.
\]
This is clearly nonnegative and hence $f$ is monotone.
Also, since $n_0(\cdot)$ is non-increasing, $f$ is orthant submodular and hence $f$ is $k$-submodular by Theorem~\ref{thm:char}.

Next we consider $g_\calP$.
Take any $\bx$ and $e\notin \mathrm{supp}(\bx)$,
and suppose that $e\in A_j$.
When adding $e$ into $X_i$, $c_{i,j}(\bx)$ increases by one and hence $d_{j-i+1}(\bx)$ increases by one.
Hence for $1\leq a<b\leq k$ we have
\begin{align*}
\Delta_{e,i} h_\calP^{a,b}(\bx)
=\begin{cases}
2((d_a(\bx)-d_b(\bx))-\varepsilon n)+1 & \text{if $a=j-i+1$ and $d_a(\bx)-d_b(\bx)\geq \varepsilon n$} \\
-2((d_b(\bx)-d_a(\bx))-\varepsilon n)+1 & \text{if $a=j-i+1$ and $d_b(\bx)-d_a(\bx)\geq \varepsilon n$} \\
-2((d_a(\bx)-d_b(\bx))-\varepsilon n)+1 & \text{if $b=j-i+1$ and $d_a(\bx)-d_b(\bx)\geq \varepsilon n$} \\
2((d_b(\bx)-d_a(\bx))-\varepsilon n)+1 & \text{if $b=j-i+1$ and $d_b(\bx)-d_a(\bx)\geq \varepsilon n$} \\
0 & \text{otherwise}.
\end{cases}
\end{align*}
Hence
\begin{align*}
&\Delta_{e,i} g_\calP(\bx)=2(k-1)n_0(\bx)+2(1+k\varepsilon)n-1+\sum_{1\leq b\leq k, b\neq j-i+1} \Delta_{e,i} h_\calP^{j-i+1, b}(\bx).
\end{align*}
To see the orthant submodularity of $g_\calP$, observe that
$2n_0+\Delta_{e,i} h_\calP^{j-i+1,b}$ is non-increasing for each $b$ with $b\neq j-i+1$.
Since
\[\Delta_{e,i} g_\calP(\bx)=2(1+k\varepsilon)n-1+\sum_{1\leq b\leq k, b\neq j-i+1} (2n_0(\bx)+\Delta_{e,i} h_\calP^{j-i+1, b}(\bx)),\]
$\Delta_{e,i} g_\calP$ is non-increasing, implying the orthant submodularity.

To see the monotonicity, let $B_+=\{b\mid d_{j-i+1}(\bx)-d_b(\bx)\geq \varepsilon n\}$ and
$B_-=\{b\mid d_{b}(\bx)-d_{j-i+1}(\bx)\geq \varepsilon n\}$.
Note also that $\sum_{1\leq b\leq k} d_b(\bx)=\sum_{1\leq a\leq k, 1\leq b\leq k}c_{a,b}(\bx)=\sum_{1\leq a\leq k, 1\leq b\leq k}|X_a\cap A_b|\leq n$.
Hence
\begin{align*}
&\Delta_{e,i} g_\calP(\bx)\\
&= 2(k-1)n_0(\bx)+2(1+k\varepsilon)n-1
+\sum_{b\in B_+\cup B_-} 2(d_{j-i+1}(\bx)-d_b(\bx))+(|B_-|-|B_+|) \varepsilon n+k-1 \\
&\geq 2\left(n-\sum_{b\in B_+\cup B_-} d_b(\bx)\right)+2(k-|B_+|) \varepsilon n\geq 0.
\end{align*}
This completes the proof of Claim~\ref{claim:hard1}.
\end{proof}

\begin{claim}
\label{claim:hard2}
$\max_{\bx} f(\bx)=(k+1+2k\varepsilon)n^2$ and
$\max_{\bx} g_\calP(\bx)\geq 2kn^2(1-O(\varepsilon))$.
\end{claim}
\begin{proof}
Since $f$ is $k$-submodular, the maximum is attained for a $k$-partition by Proposition~\ref{prop}, i.e., $n_0(\bx)=0$.
Hence the maximum value of $f$ is $(k+1+2k\varepsilon)n^2$.

To see the second statement, take $\bx=(X_1,\dots, X_k)$ such that $X_i=A_i$ for $1\leq i\leq k$.
Then $d_1(\bx)=n$ and $d_j(\bx)=0$ for $2\leq j\leq n-1$,
and thus $\max_{\bx} g_\calP(\bx)\geq (k+1)n^2+(k-1)(n-\varepsilon n)^2$.
\end{proof}

Now take a random $k$-partition $\calP$ of $V$,
and consider any deterministic algorithm that tries to distinguish $f$ and $g_\calP$,
where the algorithm do not know $\calP$.
The algorithm issues some queries to the value oracle.
Call a query to $f(\bx)$ \emph{unbalanced} if $|d_i(\bx)-d_j(\bx)| \geq \varepsilon n$ for some $i,j\in [k]$,
and otherwise \emph{balanced}.
Note that $d_i(\bx)-d_j(\bx)$ can be seen as a sum of independent random variables $\{Z_e\}_{e \in V}$, where $Z_e = 1$ if $d_i(\bx)$ is increased due to $e$, that is, $e \in X_k \cap A_{i+k-1}$ for some $k$, $Z_e = -1$ if $d_j(\bx)$ is increased due to $e$, that is, $e \in X_k \cap A_{j+k-1}$ for some $k$, and $Z_e = 0$ otherwise.
By Hoeffding's inequality\footnote{Let $X_1,\dots, X_n$ be independent random variables in $[-1,1]$,
and let $\bar{X}=\frac{1}{n}(X_1+\cdots+X_n)$.
Then $\mathbb{P}(|\bar{X}-\mathbb{E}(\bar{X})|\geq t)\leq 2{\rm e}^{-\frac{nt^2}{2}}$},
the probability that $|d_i(\bx)-d_j(\bx)| \geq \varepsilon n$ for a query is at most $2{\rm e}^{-\frac{n\varepsilon^2}{2}}$.
Hence by the union bound the probability that
a query is unbalanced is at most $k^2{\rm e}^{-\frac{n\varepsilon^2}{2}}$.
Therefore, for any fixed sequence of ${\rm e}^{\frac{n\varepsilon^2}{4}}$ queries,
the probability that a query is unbalanced is still at most
${\rm e}^{\frac{n\varepsilon^2}{4}}k^2{\rm e}^{-\frac{n\varepsilon^2}{2}}=k^2{\rm e}^{-\frac{n\varepsilon^2}{4}}$.
Therefore, with probability at least $1-k^2{\rm e}^{-\frac{n\varepsilon^2}{4}}$,
all the queries will be balanced.
As long as queries are balanced, the algorithm gets the same answer regardless of $\calP$,
and it will never find out any information about the $k$-partition $\calP$.
In other words, with high probability, the algorithm will never distinguish between $f$ and $g_\calP$.
However, the maximum of $f$ is $(1+k+2k\varepsilon)n^2$ while the maximum of $g_\calP$ is at least $2k(1-O(\varepsilon))n^2$.
This means that there is no polynomial-query algorithm with approximation ratio better than $\frac{1+k}{2k}$.
\end{proof}

\section{Approximation algorithms for skew-bisubmodular functions}\label{sec:a-bisubmodular}
In this section, we discuss the problem of maximizing an $\alpha$-bisubmodular function.
An adaptation of the greedy algorithm is shown to achieve the approximation ratio of $\frac{2\sqrt{\alpha}}{(1+\sqrt{\alpha})^2}$
for $\alpha \in [0,1]$. This ratio converges to zero as $\alpha$ goes to zero. In order to improve the performance for small $\alpha$,
we give another simple approximation algorithm that achieves the approximation ratio of $\frac{1}{3+2\alpha}$
in Section~\ref{subsec:simple-approximation}. By taking the maximum of the outputs of these two algorithms,
we obtain the approximation ratio of $\frac{8}{25}$ for any $\alpha \in [0,1]$
(the minimum of the two ratio is achieved when $\alpha = \frac{1}{16}$).

Concerning the maximum of an $\alpha$-bisubmodular function, we have the following counterpart of Proposition~\ref{prop}.
\begin{lemma}\label{clm:maximal-solution}
For any $\alpha$-bisubmodular function $f:3^V\to\mathbb{R}_+$ with $\alpha\in [0,1]$,
there exists a partition of $V$ that attains the maximum value of $f$.
\end{lemma}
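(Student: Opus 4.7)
The plan is to argue by contradiction, extending the standard maximality argument for bisubmodular functions (Proposition~\ref{prop}). Choose a maximizer $\bx^*=(X_1^*,X_2^*)$ of $f$ for which $|X_1^*\cup X_2^*|$ is as large as possible, and suppose for contradiction that some $e\in V\setminus(X_1^*\cup X_2^*)$ exists. Set
\[
\bm{u}=(X_1^*\cup\{e\},\,X_2^*),\qquad \bv=(X_1^*,\,X_2^*\cup\{e\}).
\]

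A direct check from the definitions gives $\bm{u}\sqcap\bv=\bx^*$ and $\bm{u}\sqcup\bv=\bx^*$: the element $e$ sits in the first coordinate of $\bm{u}$ and in the second of $\bv$, so the symmetric union cancels it from both coordinates. The asymmetric union, by contrast, places $e$ in its privileged first coordinate, giving $\bm{u}\dot\sqcup\bv=\bm{u}$. Substituting these into $\alpha$-bisubmodularity applied to the pair $(\bm{u},\bv)$ yields
\[
f(\bm{u})+f(\bv)\ \geq\ f(\bx^*)+\alpha f(\bx^*)+(1-\alpha)f(\bm{u}),
\]
which rearranges to $\alpha f(\bm{u})+f(\bv)\geq (1+\alpha)f(\bx^*)$.

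Since $\bx^*$ is a global maximizer, $f(\bm{u})\leq f(\bx^*)$, so the inequality above forces $f(\bv)\geq f(\bx^*)$, and hence $\bv$ is itself a maximizer. But $\mathrm{supp}(\bv)=\mathrm{supp}(\bx^*)\cup\{e\}$ strictly exceeds $\mathrm{supp}(\bx^*)$, contradicting the choice of $\bx^*$. Therefore $X_1^*\cup X_2^*=V$ and $\bx^*$ is a partition, as required.

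The only point needing any care is the asymmetric nature of $\dot\sqcup$: one must verify that $e$ ends up in the first coordinate of $\bm{u}\dot\sqcup\bv$, so that this union equals $\bm{u}$ rather than $\bv$. This is exactly what allows the $(1-\alpha)f(\bm{u})$ term on the right-hand side to merge with $f(\bm{u})$ on the left, leaving a clean lower bound on $f(\bv)$. The argument is uniform in $\alpha\in[0,1]$; indeed at $\alpha=0$ the inequality already gives $f(\bv)\geq f(\bx^*)$ without any rearrangement, and at $\alpha=1$ it reduces to the usual bisubmodular case.
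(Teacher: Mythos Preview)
Your proof is correct, but it differs from the paper's. The paper applies $\alpha$-bisubmodularity in a single shot to the pair $\bx=(S,V\setminus S)$ and $\by=(V\setminus T,T)$, where $(S,T)$ is any maximizer; one checks $\bx\sqcap\by=\bx\sqcup\by=(S,T)$ while $\bx\dot\sqcup\by$ is one of the two partitions, and the resulting weighted inequality forces both partitions to be optimal at once. You instead run a local, element-by-element argument: pick a maximizer with largest support and show that any uncovered element can be pushed into the second slot without losing optimality. Both arguments are short and valid. The paper's version is slightly slicker and immediately identifies which partition works; your version aligns naturally with the marginal viewpoint used elsewhere---indeed your derived inequality $\alpha f(\bm{u})+f(\bv)\ge(1+\alpha)f(\bx^*)$ is exactly the ``singleton'' inequality $\alpha\Delta_{e,1}f(\bx^*)+\Delta_{e,2}f(\bx^*)\ge 0$ (recorded in the paper as~\eqref{eq:a-bisub-singleton}) combined with $\Delta_{e,1}f(\bx^*)\le 0$.
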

\begin{proof}
Suppose that $(S,T)\in 3^V$ attains the maximum value of $f$.
By the $\alpha$-bisubmodularity of $f$, we have
$$\alpha f(S,V\setminus S)+f(V\setminus T,T)\geq (1+\alpha)f(S,T),$$
which implies that $f(S,V\setminus S)=f(V\setminus T,T)=f(S,T)$.
Thus the maximum value of $f$ is attained by a partition of $V$.
\end{proof}

\subsection{A randomized greedy algorithm}\label{subsec:double-greedy}
We now extend the randomized greedy algorithm for the bisubmodular function (i.e., Algorithm~\ref{alg:non-monotone} for $2$-submodular functions).
Intuitively, $\alpha$-bisubmodularity is a variant of bisubmodularity directed toward the first argument
by parameter $\alpha$. Following this intuition, we shall adjust the choice probability as shown in Algorithm~\ref{alg:a-bisubmodular}.

\begin{algorithm}
  \caption{}\label{alg:a-bisubmodular}
  \begin{algorithmic}
  \REQUIRE{A nonnegative $\alpha$-submodular function $f:\{0,1,2\}^V \to \bbR_+$.}
  \ENSURE{A vector $\bs\in \{0,1,2\}^V$.}
  \STATE{$\bs\leftarrow{\bf 0}$.}
  \FOR{each $e \in V$}
      \STATE{$y_i \leftarrow \Delta_{e,i}f(\bs)$ for $i=1,2$.}
    \IF{$y_2<0$}
    	\STATE{$p_i\leftarrow 
    \begin{cases} 
    1 & \text{if $i=1$} \\ 
    0 & \text{if $i=2$}.
    \end{cases}
    $}
%
    \ELSIF{$y_1<0$}
      \STATE{$p_i\leftarrow 
    \begin{cases} 
    0 & \text{if $i=1$} \\ 
    1 & \text{if $i=2$}.
    \end{cases}
    $} 
    \ELSE{}
    	\STATE{$p_i\leftarrow 
    \begin{cases} 
    \frac{\alpha y_1}{\alpha y_1+y_2} & \text{if $i=1$} \\ 
    \frac{y_2}{\alpha y_1+y_2} & \text{if $i=2$}. 
    \end{cases} 
    $}
    \ENDIF
    \STATE{Let $\bs(e) \in \{1,2\}$ be chosen randomly, with $\Pr[\bs(e) = i] = p_i$ for all $i \in \{1,2\}$.}
    \ENDFOR
  \RETURN{$\bs$}
  \end{algorithmic}
\end{algorithm}

Note that, by the
$\alpha$-bisubmodularity of $f$, we have
\begin{equation}
\label{eq:a-bisub-singleton}
\alpha \Delta_{e,1}f(\bx)+\Delta_{e,2}f(\bx)\geq 0
\end{equation}
for any $\bx\in 3^V$ and $e\notin {\rm supp}(\bx)$, which implies $\alpha y_1+y_2 \geq 0$.

The following theorem provides a performance analysis of this algorithm.
\begin{theorem}\label{thr:maximize-a-bisubmodular-greedy}
For any $\alpha\in [0,1]$, the randomized greedy algorithm for maximizing $\alpha$-bisubmodular functions provides
an approximate solution within a factor of $\frac{2\sqrt{\alpha}}{(1+\sqrt{\alpha})^2}$.
\end{theorem}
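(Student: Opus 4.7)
The plan is to reuse the telescoping framework of Lemma~\ref{lem:1}: its proof only invokes the identities $\mathbb{E}[f(\bo^{(j-1)})-f(\bo^{(j)})\mid\bs^{(j-1)}]=\sum_i(a_{i^*}-a_i)p_i$ and $\mathbb{E}[f(\bs^{(j)})-f(\bs^{(j-1)})\mid\bs^{(j-1)}]=\sum_i y_ip_i$, both of which remain valid verbatim in the $\alpha$-bisubmodular setting when the hybrids $\bo^{(j)}$ and $\bt^{(j-1)}$ are defined via the $\sqcup$ operator exactly as in Section~\ref{sec:k-submodular}. The task therefore reduces to proving, for every iteration $j$, the per-step inequality $\sum_{i=1}^{2}(a_{i^*}-a_i)p_i\leq c\sum_{i=1}^{2}y_ip_i$ with $c=(1+\alpha)/(2\sqrt{\alpha})$, since $1/(1+c)=2\sqrt{\alpha}/(1+\sqrt{\alpha})^2$ is exactly the claimed ratio.

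Before splitting into the three algorithmic branches, I would record two structural facts. A short direct check from the definition shows that $\alpha$-bisubmodularity implies ordinary orthant submodularity, because for any $\preceq$-comparable pair $\bx\preceq\by$ the terms $\bx\sqcup\by$ and $\bx\dot\sqcup\by$ both collapse to $\by$; this gives $y_i\geq a_i$ for $i=1,2$. Second, the marginal inequality \eqref{eq:a-bisub-singleton} applied at $\bs^{(j-1)}$ and at $\bt^{(j-1)}$ yields the asymmetric nonnegativity $\alpha y_1+y_2\geq 0$ and $\alpha a_1+a_2\geq 0$, which here replace the symmetric pairwise monotonicity used in the $k$-submodular proofs. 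The two extreme branches of Algorithm~\ref{alg:a-bisubmodular} then fall easily: when $y_2<0$, the asymmetry forces $y_1>0$, $a_2<0$, and $a_1>0$, so $a_{i^*}-a_1\leq 0\leq c\,y_1$ for either choice of $i^*$; the $y_1<0$ branch is the mirror image.

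The heart of the argument is the branch $y_1,y_2\geq 0$. Setting $\beta=\alpha y_1+y_2$, substituting $p_1=\alpha y_1/\beta$ and $p_2=y_2/\beta$ and clearing $\beta$ reduces the target inequality to $y_2(a_1-a_2)\leq c(\alpha y_1^2+y_2^2)$ when $i^*=1$, and to $\alpha y_1(a_2-a_1)\leq c(\alpha y_1^2+y_2^2)$ when $i^*=2$; the factor $\alpha$ inserted into $p_1$ is exactly what cancels the $a_{i^*}$ contributions and yields this clean form. The main obstacle is to bound $|a_1-a_2|$ sharply. In the easy subcase $a_1,a_2\geq 0$, the crude bound $a_{i^*}-a_{3-i^*}\leq a_{i^*}\leq y_{i^*}$ suffices with a smaller constant. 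The tight subcase is when $a_{3-i^*}<0$: combining $a_{i^*}\leq y_{i^*}$ with $-a_{3-i^*}\leq\alpha a_{i^*}\leq\alpha y_1$ (for $i^*=1$), and with its mirror $-a_{3-i^*}\leq a_{i^*}/\alpha\leq y_2/\alpha$ (for $i^*=2$), gives respectively $a_1-a_2\leq(1+\alpha)y_1$ and $a_2-a_1\leq\tfrac{1+\alpha}{\alpha}y_2$. Substituting back, both subcases collapse to the single inequality $(1+\alpha)\,y_1y_2\leq c\,(\alpha y_1^2+y_2^2)$, which is precisely AM-GM: $\alpha y_1^2+y_2^2\geq 2\sqrt{\alpha}\,y_1y_2$, tight when $y_2=\sqrt{\alpha}\,y_1$ and matching the bound exactly at $c=(1+\alpha)/(2\sqrt{\alpha})$. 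This AM-GM step is what pins down both the weighting $(p_1,p_2)\propto(\alpha y_1,y_2)$ in the algorithm and the appearance of $\sqrt{\alpha}$ in the final approximation ratio.
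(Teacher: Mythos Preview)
Your proof is correct and follows essentially the same route as the paper's: both invoke the telescoping framework of Lemma~\ref{lem:1} (valid here because an optimal partition exists by Lemma~\ref{clm:maximal-solution}), dispose of the branches $y_1<0$ and $y_2<0$ by sign considerations, and in the main branch $y_1,y_2\ge0$ reduce the per-step inequality to $(1+\alpha)y_1y_2\le c(\alpha y_1^2+y_2^2)$ and finish with AM--GM. The only cosmetic difference is that the paper obtains $a_1-a_2\le(1+\alpha)a_1\le(1+\alpha)y_1$ (and symmetrically for $i^*=2$) in one stroke from $\alpha a_1+a_2\ge0$ without your sign-based subcase split---which, incidentally, leaves the trivial case $a_{i^*}<0\le a_{3-i^*}$ unmentioned.
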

\begin{proof}
Note that Lemma~\ref{lem:1} holds for any function on $\{0,1,\dots, k\}^V$ as long as the maximum is taken by a partition of $V$, which is the case in the $\alpha$-bisubmodular function maximization by Lemma~\ref{clm:maximal-solution}. Hence it suffices to prove (\ref{1}) for every $1\leq j\leq n$ for $c=(1+\alpha)/(2\sqrt{\alpha})$.
For simplicity of the description we shall omit the subscript $(j)$. Our goal is to show
\begin{equation}
\label{eq:bi-1}
(a_{i^*}-a_1)p_1+(a_{i^*}-a_2)p_2\leq c(y_1p_1+y_2p_2)
\end{equation}
for each $i^*\in \{1,2\}$ by using
\begin{align*}
\alpha a_1+a_2&\geq 0 \\
y_i&\geq  a_i \qquad (i=1,2),
\end{align*}
which follow from the $\alpha$-bisubmodularity.

If $y_1<0$, then $a_1<0$ and $y_2\geq a_2\geq -\alpha a_1\geq 0$, and hence
$cy_2\geq y_2\geq \min\{a_1-a_2, 0\}$, implying (\ref{eq:bi-1}).
A symmetric argument also implies (\ref{eq:bi-1}) if $y_2<0$.

By $\alpha y_1+y_2\geq 0$, the remaining case is when $y_1\geq 0$ and $y_2\geq 0$.
If $i^*=1$,
then 
\begin{align*}(a_1-a_2)p_2&=\frac{(a_1-a_2)y_2}{\alpha y_1+y_2}
\leq \frac{(1+\alpha)a_1y_2}{\alpha y_1+y_2}\leq \frac{(1+\alpha)y_1y_2}{\alpha y_1+y_2}
= \frac{(1+\alpha)(\alpha y_1^2+y_2^2-(\sqrt{\alpha} y_1-y_2)^2)}{2\sqrt{\alpha}(\alpha y_1+y_2)} \\
&\leq \frac{(1+\alpha)(\alpha y_1^2+y_2^2)}{2\sqrt{\alpha}(\alpha y_1+y_2)}
=\frac{1+\alpha}{2\sqrt{\alpha}}(y_1p_1+y_2p_2).\end{align*}
Thus (\ref{eq:bi-1}) holds.
On the other hand, if $i^*=2$, 
\begin{align*}(a_2-a_1)p_1&=\frac{(a_2-a_1)\alpha y_1}{\alpha y_1+y_2}
\leq \frac{(1+\alpha)a_2y_1}{\alpha y_1+y_2}\leq \frac{(1+\alpha)y_1y_2}{\alpha y_1+y_2}
= \frac{(1+\alpha)(\alpha y_1^2+y_2^2-(\sqrt{\alpha} y_1-y_2)^2)}{2\sqrt{\alpha}(\alpha y_1+y_2)} \\
&\leq \frac{(1+\alpha)(\alpha y_1^2+y_2^2)}{2\sqrt{\alpha}(\alpha y_1+y_2)}
=\frac{1+\alpha}{2\sqrt{\alpha}}(y_1p_1+y_2p_2).\end{align*}
Thus (\ref{eq:bi-1}) holds.
\end{proof}

\subsection{The second algorithm}\label{subsec:simple-approximation}
In this section, we describe another algorithm for maximizing $\alpha$-bisubmodular functions,
which achieves a better approximation ratio than the randomized greedy algorithm for small $\alpha$.

%

For an $\alpha$-bisubmodular function $f:3^V \to \mathbb{R}_+$ with $\alpha \in [0,1]$,
we define $f':2^V \to \mathbb{R}_+$ by $f'(X) = f(X,\emptyset)$ for $X\in 2^V$.
Since $f'(X)$ is a non-negative submodular function, we can apply the randomized
double greedy algorithm of \cite{bfns} to obtain a $\frac{1}{2}$-approximate solution $Z$
to the maximization of $f'$. Our second algorithm for $\alpha$-bisubmodular function
maximization is rather simple: Take the better of $(\emptyset,V)$ and $(Z,\emptyset)$.

\begin{theorem}\label{thr:maximize-a-bisubmodular-simple}
The second algorithm for $\alpha$-bisubmodular function maximization problem
provides a $\frac{1}{3+2\alpha}$-approximate solution for any $\alpha\in [0,1]$.
%
\end{theorem}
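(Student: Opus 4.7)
By Lemma~\ref{clm:maximal-solution}, fix a maximizer $\bo=(O_1,O_2)$ of $f$ that is a partition of $V$; in particular $O_2=V\setminus O_1$. The subroutine supplies $Z$ with $\mathbb{E}[f'(Z)]\geq \tfrac{1}{2}\max_{X\subseteq V} f'(X)\geq \tfrac{1}{2}f'(O_1)$; this is the only property of the randomized double greedy algorithm the argument will use. The algorithm returns $\max\{f(Z,\emptyset),f(\emptyset,V)\}=\max\{f'(Z),f(\emptyset,V)\}$, so the task is to lower bound the expectation of this maximum.

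The heart of the argument is the deterministic inequality
\[
(1+\alpha)f'(O_1)+f(\emptyset,V)\ \geq\ f(\bo),
\]
which I would prove by combining two applications of $\alpha$-bisubmodularity. First, with $\bx=(O_1,\emptyset)$ and $\by=(\emptyset,O_2)$, one verifies that $\bx\sqcap\by=(\emptyset,\emptyset)$ while both $\bx\sqcup\by$ and $\bx\dot\sqcup\by$ equal $\bo$ (using $O_1\cap O_2=\emptyset$), so nonnegativity of $f$ gives
\[
f'(O_1)+f(\emptyset,O_2)\ \geq\ f(\bo).
\]
Second, with $\bx=(O_1,\emptyset)$ and $\by=(\emptyset,V)$, one computes $\bx\sqcap\by=(\emptyset,\emptyset)$, $\bx\sqcup\by=(\emptyset,O_2)$, and $\bx\dot\sqcup\by=\bo$, yielding
\[
f'(O_1)+f(\emptyset,V)\ \geq\ \alpha f(\emptyset,O_2)+(1-\alpha)f(\bo).
\]
Adding $\alpha$ times the first inequality to the second cancels the $f(\emptyset,O_2)$ terms and produces the desired bound.

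For the final step, substituting $f'(O_1)\leq 2\,\mathbb{E}[f'(Z)]$ upgrades the key inequality to $2(1+\alpha)\,\mathbb{E}[f'(Z)]+f(\emptyset,V)\geq f(\bo)$. Since $\tfrac{2(1+\alpha)}{3+2\alpha}+\tfrac{1}{3+2\alpha}=1$ and both coefficients lie in $[0,1]$ for $\alpha\in[0,1]$, dividing by $3+2\alpha$ expresses the left-hand side as a convex combination of $\mathbb{E}[f'(Z)]$ and $f(\emptyset,V)$, which is bounded above by $\mathbb{E}[\max\{f'(Z),f(\emptyset,V)\}]$. This gives the claimed $\tfrac{1}{3+2\alpha}$-approximation.

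The step I expect to require the most care is the selection of the two $\alpha$-bisubmodular instances. Because the definition is asymmetric in its two coordinates (only $\bx\dot\sqcup\by$ carries the weight $1-\alpha$), the orientations of $\bx$ and $\by$ matter: they must be chosen so that $\bx\dot\sqcup\by$ lands on $\bo$ in both instances and so that the stray $\alpha f(\emptyset,O_2)$ terms cancel under the linear combination. Everything else is routine.
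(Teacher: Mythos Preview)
Your proposal is correct and follows essentially the same route as the paper: the same two $\alpha$-bisubmodular instances $\bigl((O_1,\emptyset),(\emptyset,O_2)\bigr)$ and $\bigl((O_1,\emptyset),(\emptyset,V)\bigr)$, the same linear combination (adding $\alpha$ times the first to the second) to obtain $(1+\alpha)f(O_1,\emptyset)+f(\emptyset,V)\geq f(\bo)$, and the same convex-combination finish. If anything, you are slightly more careful than the paper in explicitly invoking Lemma~\ref{clm:maximal-solution} so that $O_2=V\setminus O_1$, which is needed for the computations of $\bx\sqcup\by$ and $\bx\dot\sqcup\by$ in the second instance to land on $(\emptyset,O_2)$ and $\bo$ respectively.
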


\begin{proof}
  Let $(S,T)$ be an optimal solution.
  By the $\alpha$-bisubmodularity, we have
  \begin{align*}
    f(S,\emptyset) + f(\emptyset,V)
    & \geq
    f(\emptyset,\emptyset) + \alpha f(\emptyset,T) + (1-\alpha) f(S,T),\\
    f(S,\emptyset) + f(\emptyset,T)
    & \geq
    f(\emptyset,\emptyset) + f(S,T),
  \end{align*}
  which imply
  \begin{align*}
    (1+\alpha)f(S,\emptyset) + f(\emptyset,V)    \geq  (1+\alpha) f(\emptyset,\emptyset) + f(S,T) \geq f(S,T).\label{eq:simple-opt}
  \end{align*}

  Let $(A,B)$ be the output by the algorithm.
  Then,
  \begin{align*}
    &\mathbb{E}[f(A,B)]
    \geq
    \max\left\{ \frac{1}{2}f(S,\emptyset),f(\emptyset,V) \right\}
    \geq
    \frac{2+2\alpha}{3+2\alpha} \cdot \frac{1}{2}f(S,\emptyset) + \frac{1}{3+2\alpha} f(\emptyset,V)  \\
    = &
    \frac{1}{3+2\alpha} \left((1+\alpha) f(S,\emptyset) + f(\emptyset,V)\right)
    \geq \frac{1}{3+2\alpha} f(S,T).
      \qedhere
  \end{align*}
\end{proof}

Combining this with the first algorithm, we obtain an approximate solution within a factor of
$\max \left\{\frac{2\sqrt{\alpha}}{(1+\sqrt{\alpha})^2}, \frac{1}{3+2\alpha} \right\}$.
The minimum of this ratio is $\frac{8}{25}$, which is achieved when $\alpha = \frac{1}{16}$.

\bibliographystyle{abbrv}
\bibliography{k-submodular}
\end{document}